\documentclass[runningheads]{llncs}
\usepackage{graphicx}
\usepackage{adjustbox}
\usepackage[inline]{enumitem}
\usepackage{blindtext}
\usepackage{orcidlink}
\usepackage{caption}
\usepackage{subcaption}
\usepackage{listings}
\usepackage{lstautogobble}
\usepackage{tikz}
\usepackage{amsmath}
\usepackage{amssymb}
\usepackage{stmaryrd}
\usepackage{cleveref}
\usepackage{todonotes}
\usepackage{wrapfig}
\usepackage{mathtools}
\usepackage{mleftright}
\usepackage{ifthen}
\usepackage{booktabs}
\usepackage{marvosym}

\newboolean{fullversion}
\setboolean{fullversion}{true}

\usetikzlibrary{calc,positioning,fit,backgrounds,matrix,shapes,decorations,decorations.pathreplacing,patterns}
\tikzset{shorten >=1pt,semithick}

\definecolor{bluekeywords}{rgb}{0.13, 0.13, 1}
\definecolor{greentypes}{rgb}{0, 0.5, 0}
\definecolor{orangecomments}{rgb}{1, 0.5, 0.1}
\definecolor{redstrings}{RGB}{171, 114, 2}
\definecolor{graynumbers}{rgb}{0.5, 0.5, 0.5}
\definecolor{goldcomments}{rgb}{0.6, 0.4, 0.08}

\definecolor{implemented}{rgb}{0.67, 0.9, 0.93}
\colorlet{existing}{lightgray}

\lstdefinelanguage{Lola}{
  keywords=[0]{input, output, trigger, constant, import, spawn, eval, close, with, when},
  moredelim=**[is][\transparent{0.6}]{?}{?},
  moredelim=**[is][\color{greentypes}@]{@}{@},
  keywordstyle=[0]\bfseries\color{bluekeywords},
  keywords=[1]{if, then, else, aggregate, defaults, offset, last, by, or, to, sin, cos, abs, hold, over, using, over_instances, prob},
  keywords=[2]{Variable, String, Int, Int64, UInt, UInt64, Bool, Float32, Float64, Float},
  keywordstyle=[2]\color{greentypes},
  sensitive=false,
  comment=[l]{//},
  morecomment=[s]{/*}{*/},
  morestring=[b]',
  morestring=[b]",
  literate={\\@}{@}1,
  moredelim=[s][\color{purple}\bfseries]{\#[}{]},
}
\makeatletter
\newcommand{\randFunc}{\mathcal{K}}

\newcommand{\Lap}{\mathit{Lap}}
\newcommand{\NN}{\mathbb{N}}
\newcommand{\BB}{\mathbb{B}}
\newcommand{\RR}{\mathbb{R}}
\newcommand{\VV}{\mathbb{V}}
\newcommand{\set}[1]{\left\{#1\right\}}
\newcommand{\setCond}[2]{\left\{#1\;\mid\;#2\right\}}
\newcommand{\floor}[1]{\left\lfloor#1\right\rfloor}

\newcommand{\pacingModel}{\mathbb{P}}
\newcommand{\timeMap}{\text{TimeMap}}
\newcommand{\pacingMap}{\text{PacingMap}}
\newcommand{\streamMap}{\text{StreamMap}}
\newcommand{\Stream}{\text{Stream}}
\renewcommand{\time}{\text{Time}}
\newcommand{\sr}{\mathtt{ID}}
\newcommand{\Iref}{\sr^\uparrow}
\newcommand{\Oref}{\sr^\downarrow}
\newcommand{\POref}{\sr^\downarrow_{\text{pub}}}
\newcommand{\World}{\mathbb{W}}
\newcommand{\world}{\omega}

\newcommand{\offset}{\mathsf{offset}}
\newcommand{\sync}{\mathsf{sync}}

\newcommand{\olast}{\mathit{last}}
\newcommand{\windowtimes}{\mathit{wtimes}}

\begin{document}
\title{Differentially Private Runtime Monitoring}
\titlerunning{Differentially Private Runtime Monitoring}
\author{
  Bernd Finkbeiner\orcidlink{0000-0002-4280-8441}
  \and Frederik Scheerer\textsuperscript{(\Letter)}\orcidlink{0009-0007-8115-0359}
}
\authorrunning{Finkbeiner and Scheerer}

\institute{CISPA Helmholtz Center for Information Security,\\Saarbrücken, Germany\newline
\email{\{finkbeiner, frederik.scheerer\}@cispa.de}}
\maketitle
\begin{abstract}
  Modern stream-based monitors collect detailed statistics of the runtime behavior of the system under observation.
  If the system runs in a privacy-sensitive context, this poses the risk of disclosing sensitive information.
  Differential privacy is the state-of-the-art approach for protecting sensitive information, however, integrating it into runtime monitoring is challenging: temporal operators can cause individual input values to influence multiple outputs over time, leading to repeated disclosure of private information.
  We propose an approach that automatically enforces differential privacy in stream-based monitoring specifications by analyzing temporal dependencies and injecting carefully calibrated noise into the specification.
  To preserve the utility of the outputs, we identify strategically chosen positions in the specification for noise injection and leverage tree-based mechanisms to mitigate the accuracy loss caused by noise injected into aggregation operators.
  We demonstrate the practicality and effectiveness of our approach in a case study on monitoring public transportation usage.

  \keywords{Runtime Verification \and Stream-Based Monitoring \and Privacy}
\end{abstract}

\section{Introduction}

Runtime monitoring is a powerful technique for ensuring a system's correctness during operation, especially in complex, real-world applications where static verification is difficult to obtain.
In many cases, a binary yes/no verdict is insufficient, and instead, stream-based languages output values that summarize and aggregate the system's current state over time.
At the same time, runtime monitoring is increasingly deployed in privacy-sensitive domains, where richer monitor outputs increase the risk of revealing private information.

For example, consider a car-rental company that continuously monitors the distance driven by each vehicle in order to schedule maintenance, detect abnormal wear, or ensure contractual conditions.
Even if the monitor reports only aggregated values, such as average daily distance driven, these outputs may still reveal sensitive information about the renter's behavior, including typical usage patterns and inferred routines when observed over time.
Similar concerns arise when monitoring commercial vehicle fleets, where runtime monitors are used to assess efficiency, safety, or regulatory compliance but inadvertently expose routes or operating schedules of drivers.
While runtime monitoring is essential for ensuring safety, performance, and compliance in such systems, it is crucial to ensure that the monitoring process does not become a source of privacy violations.

Differential privacy~(DP)~\cite{DBLP:conf/tcc/DworkMNS06} is widely regarded as the gold standard for providing formal privacy guarantees.
Informally, it ensures that the presence or absence of a single individual in the dataset has minimal impact on the output of a query.
This is typically achieved by adding carefully calibrated noise to the output and thereby producing a mechanism with provable privacy guarantees.

Integrating differential privacy into stream-based monitoring, however, introduces unique challenges due to the temporal nature of streams and the temporal operators describing them.
In contrast to static queries, monitors repeatedly release information over time, often derived from overlapping portions of the input data.
Releasing information derived from a single data point multiple times increases the overall privacy loss, as each release leaks additional information.
With temporal operators such as sliding-window aggregations, we observe exactly this behavior, and to reason about privacy in this setting, it is necessary to understand how individual inputs propagate through the temporal operators.

This observation introduces two key challenges.
First, we must be able to precisely characterize and compute the impact of individual inputs on the outputs, accounting for the effects of asynchronous streams and temporal operators, such as sliding-window aggregations.
Second, beyond providing formal privacy guarantees, we must ensure that the resulting private monitors remain accurate.
In particular, temporal operators amplify the influence of individual inputs, and the required amount of noise may degrade the quality of monitoring outputs.

In this paper, we present a runtime monitoring framework that enables users without privacy expertise to deploy monitors that prevent the leakage of sensitive information.
Our framework integrates differential privacy by analyzing the specification and automatically injecting appropriate amounts of noise.
It strategically selects positions in the specification that are beneficial to the accuracy of the output, and leverages tree-based constructs to further improve monitoring accuracy.
We demonstrate our approach using the stream-based specification language RTLola~\cite{DBLP:conf/fm/BaumeisterFKS24}, which has already been applied in several real-world, privacy-sensitive areas~\cite{DBLP:conf/tacas/BiewerFHKSS21,DBLP:conf/rv/FaymonvilleFST16,DBLP:conf/tacas/BaumeisterFSSW25,finkbeiner2021robust}.
In a case study on public transportation usage, we show that our approach enables private monitoring with high accuracy.

The remainder of this paper is organized as follows.
\Cref{sec:overview} gives an intuition of our approach using a motivating example, and \Cref{sec:dp_preliminaries} introduces our formalization of differential privacy for stream-based monitors.
In \Cref{sec:dp_lola}, we detail the sensitivity calculations for stream-based specifications, while \Cref{sec:accuracy} discusses our strategies for improving the accuracy of the private monitors.
Finally, \Cref{sec:stricter} explores extensions of our approach to stricter privacy guarantees, and \Cref{sec:evaluation} reports on experimental results.

\subsection{Related Work}

Closest to our work is the approach of Henzinger et al.~\cite{DBLP:conf/ccs/HenzingerKT25}, which studies private runtime monitoring based on private function evaluation while maintaining an internal state.
Further, prior work has explored alternative notions of privacy in LTL monitoring~\cite{DBLP:conf/emsoft/Abbas19,banno2022oblivious}.
However, these approaches do not rely on differential privacy and do not support the rich output provided by stream-based languages.

We are the first to integrate DP directly into runtime verification frameworks.
Previous work has focused on enforcing DP on database-style queries, especially SQL-like languages~\cite{DBLP:journals/pvldb/JohnsonNS18,DBLP:conf/innovations/BlockiBDS13,DBLP:journals/cacm/McSherry10,DBLP:conf/sigmod/MohanTSSC12,DBLP:conf/osdi/NarayanH12,DBLP:conf/stoc/NissimRS07,DBLP:journals/pvldb/ProserpioGM14}.
Some related work also considers queries on data streams~\cite{DBLP:conf/ipccc/LiWWWZM24}.
All these approaches, however, are designed for database-style queries and emphasize operators such as joins, rather than the temporal operators used in runtime verification.

An algorithmic idea adopted in our work is tree-based mechanisms for continual queries under DP.
Dwork et al.~\cite{DBLP:conf/stoc/DworkNPR10} introduced tree-based mechanisms for continual counting queries and Chan et al.~\cite{DBLP:journals/tissec/ChanSS11} independently proposed a similar approach for infinite streams.
Tree-based mechanisms were later extended to other aggregations such as sums~\cite{DBLP:conf/ndss/PerrierAK19,DBLP:conf/ccs/0001C0SC0LJ21,DBLP:journals/pvldb/QardajiYL13,DBLP:conf/aistats/Cardoso022}.
In addition to tree-based approaches, matrix-based mechanisms have been proposed for streaming data~\cite{DBLP:conf/focs/DvijothamMP0T24,DBLP:conf/ipccc/LiWWWZM24,DBLP:conf/soda/HenzingerUU23,DBLP:conf/soda/HenzingerUU24}, offering alternative trade-offs between accuracy and efficiency.
Some approaches also enforce DP for sliding window aggregations~\cite{DBLP:conf/edbt/CaoXGLBT13,DBLP:conf/icml/Upadhyay19,DBLP:journals/fcsc/ChenNZX23}, a temporal operator commonly used in stream-based monitoring.
Most streaming DP works assume event-level privacy and independent data; however, some approaches support user-level privacy~\cite{DBLP:conf/sp/FengMWLQH24,DBLP:conf/sp/DongLY23,DBLP:conf/ccs/ChenMHM17,DBLP:conf/icde/CaoYX017} or correlated data~\cite{DBLP:conf/cikm/FanX12,DBLP:conf/sigmod/SongWC17}.

We demonstrate our approach using the stream-based specification language RTLola~\cite{DBLP:conf/fm/BaumeisterFKS24}, an extension of Lola~\cite{DBLP:conf/time/DAngeloSSRFSMM05} that incorporates asynchronous streams and real-time functionality.
RTLola has been applied in multiple domains, including several privacy-sensitive domains such as monitoring sensor data from private vehicles~\cite{DBLP:conf/tacas/BiewerFHKSS21}, analyzing network traffic~\cite{DBLP:conf/rv/FaymonvilleFST16}, monitoring automatic decision and prediction systems~\cite{DBLP:conf/tacas/BaumeisterFSSW25}, and processing medical health data~\cite{finkbeiner2021robust}.
Although our approach is presented in the context of RTLola, the underlying ideas extend to other stream-based specification languages such as TeSSLa~\cite{kallwies2022tessla} and Striver~\cite{gorostiaga2018striver}.
\section{Motivating Example}\label{sec:overview}

We begin with a motivating example that illustrates the key ideas of our approach.
The methods will be detailed in the following sections.
We work in the setting of stream-based monitoring, where specifications define sets of streams, representing infinite sequences of values.
Input streams represent incoming data from the monitored system, while output streams are defined through stream equations and compute new streams by transforming and aggregating the inputs.

\begin{figure}[t]
	\begin{lstlisting}
		input score : Int64	// $\mathcolor{orange}{\in \set{1,2,3,4,5,6}}$
		input conf : Int64 	// $\mathcolor{orange}{\in \set{-1,0,1}}$
		output adj := (6-score)*3 + conf + 1
		output davg @1d@ :=
			adj.aggregate(over: 3d, using: avg).defaults(to: 0.0)
		output low @1d@ := min(low.offset(by: -1).defaults(to: 15.0), davg)
		output high @1d@ := max(high.offset(by: -1).defaults(to: 0.0), davg)
		#[public]
		output range @1d@ := (low, high)
	\end{lstlisting}
	\caption{An example RTLola specification calculating statistics of user feedback.}
	\label{fig:example_rtlola}
\end{figure}

Consider \Cref{fig:example_rtlola}, which shows an example for the stream-based specification language RTLola~\cite{DBLP:conf/fm/BaumeisterFKS24}.
It models a monitoring scenario in which user feedback is continually collected and evaluated over time.
Each user's feedback consists of new values added to the two input streams:
a \lstinline!score!, ranging from 1 (best) to 6 (worst), and a \lstinline!conf! value indicating positive (1), neutral (0), or negative (-1) confidence.
The specification combines these inputs into a single adjusted rating between 0 and 17, and emits the result as a new value of the output stream \lstinline!adj!.
Using a sliding window aggregation, the stream \lstinline!davg! computes the average rating over the last three days.
Finally, the streams \lstinline!low! and \lstinline!high! track the minimum and maximum daily averages observed so far.
The \lstinline!range! stream, consisting of both statistics, is marked as public using RTLola's annotation mechanism~\cite{DBLP:conf/rv/BaumeisterFS25}, indicating that its values are output by the monitor to the public.

Individual user feedback is privacy-sensitive.
While we assume the monitor itself is trusted, its public outputs go into an untrusted world.
Therefore, the monitor must ensure that no private information is leaked by observing the publicly emitted values of \lstinline!range! over time.
To incorporate differential privacy into a stream-based specification, we must understand how a change to a single input influences observable outputs.
Offset accesses propagate information across time, as seen in the recursive definitions of \lstinline!low! and \lstinline!high!, where a single rating influences arbitrarily many future outputs.
Similarly, in sliding-window aggregations, each input contributes to multiple overlapping windows, so a single change affects multiple results.

\begin{wrapfigure}[17]{r}{0.3\linewidth}
	\centering
	\scalebox{0.9}{
	\begin{tikzpicture}[n/.style={draw,ellipse,minimum height=7mm,inner sep=0.8mm},sens/.style={color=white,fill=blue,circle,font=\footnotesize,inner sep=0.5mm},ch/.style={pattern=crosshatch dots,pattern color=red!50!white},li/.style={pattern=north east lines, pattern color=green!50!white}]
		\node[n,ch,draw=red!50!black] (mark) {score};
		\node[n,right=5mm of mark,ch,draw=red!50!black] (confidence) {conf};
		\node[n,ch] (adj) at ($(mark)!0.5!(confidence) + (0,-10mm)$) {adj};
		\node[n,below=7mm of adj,li] (avg) {davg};
		\node[n,below left=8mm and 2mm of avg,anchor=center,li] (low) {low};
		\node[n,below right=8mm and 2mm of avg,anchor=center,li] (high) {high};
		\node[n,below=14mm of avg,li,double,draw=green!50!black] (range) {range};

		\draw[->] (adj) -- node[below left] {0} (mark);
		\draw[->] (adj) -- node[below right] {0} (confidence);
		\draw[->] (avg) -- node[right] {3s} (adj);
		\draw[->] (low) -- node[above left,pos=0.2] {0} (avg);
		\draw[->] (high) -- node[above right,pos=0.2] {0} (avg);
		\draw[->] (range) -- node[below left,pos=0.8] {0} (low);
		\draw[->] (range) -- node[below right,pos=0.8] {0} (high);
		\draw[->] (low) to[loop left] node[above,pos=0.8] {-1} (low);
		\draw[->] (high) to[loop right] node[above,pos=0.2] {-1} (high);
		\node[below=-2mm of mark,sens,xshift=-2mm] {5};
		\node[below=-2mm of confidence,sens,xshift=2mm] {2};
		\node[right=-2mm of adj,sens,yshift=-2mm] {17};
		\node[right=-2mm of avg,sens,yshift=1mm] {51};
	\end{tikzpicture}
	}
	\caption{The dependency graph for the specification in \Cref{fig:example_rtlola}.}
	\label{fig:dep_graph_example}
\end{wrapfigure}
Dependencies are made explicit in the \emph{dependency graph}~\cite{DBLP:conf/fm/BaumeisterFKS24} of the specification.
Each stream in the specification corresponds to a node, and stream accesses are represented by directed edges in the graph, annotated with the access kind.
The dependency graph for the example is shown in \Cref{fig:dep_graph_example}.
In the graph, the private input streams are marked with a red outline, but streams that (indirectly) depend on private inputs also contain private information.
The extent to which private information influences a stream is quantified by the \emph{sensitivity}.
Intuitively, the sensitivity of a stream bounds how much its values can change when a single user's feedback is modified.
In a temporal setting, this captures both how strongly values change and how often this change reappears over time.
Sensitivity propagates through the dependency graph, accumulating at arithmetic operations and amplifying across temporal operators.

For instance, the sensitivity of the \lstinline!score! stream is 5, reflecting its input range.
The \lstinline|adj| stream combines multiple pieces of private information and therefore has higher sensitivity.
The sliding window further amplifies this:
A single value contributes to three overlapping windows, resulting in a sensitivity of 51 for \lstinline!davg!.
This sensitivity information allows us to track how private information flows through the specification.
Based on this information, we select a subset of nodes at which the monitor adds noise to each value.
Adding noise to a node renders the corresponding information in the stream privacy-preserving, which in turn makes all downstream computations safe to publish.
This yields a placement problem: where should noise be added so that all public outputs are protected, while keeping the loss of accuracy as small as possible?

In the example, we select the \lstinline!davg! stream and add noise calibrated to its sensitivity.
This choice ensures that all values flowing into the recursive definitions of \lstinline!low! and \lstinline!high! are already privacy-preserving.
Applying noise earlier, for example at the inputs, would reduce accuracy because each aggregated value would be noisy by itself.
After the aggregation, a single rating can be hidden among multiple others, making its individual contribution smaller and requiring less perturbation relative to the result.
Now, as this node is privacy-preserving, it and all downstream nodes are marked as green.
As all public outputs are among these green nodes, the overall specification satisfies the desired privacy guarantees.

In this example, the sliding window amplifies the sensitivity linearly with the window size.
Later in the paper, we show how tree-based aggregation methods can substantially increase accuracy for longer windows.
\section{Differential Privacy of Stream-Based Monitors}\label{sec:dp_preliminaries}

This section formalizes differential privacy for stream-based runtime monitoring.
We first summarize the standard definitions and mechanisms of differential privacy, stated with respect to a general adjacency relation.
We then present the evaluation model used to represent monitor executions and define an event-level adjacency notion capturing the privacy requirements for stream-based monitors.

\subsection{Differential Privacy Background}

Differential privacy~\cite{DBLP:conf/tcc/DworkMNS06} is the de facto standard for ensuring privacy, grounded in a rigorous mathematical definition.
Originally introduced in the context of databases, it provides a strong guarantee: the presence or absence of any individual in the dataset has only a minimal impact on the output of a query on the database.
Intuitively, a randomized mechanism is differentially private if an observer cannot distinguish whether a particular record was part of the database based on the output of the mechanism.
\begin{definition}[Differential Privacy~\cite{DBLP:journals/fttcs/DworkR14}]
	Let $\mathcal{D}$ be a domain and $\sim\;\subseteq \mathcal{D}\times\mathcal{D}$ a symmetric adjacency relation.
	A randomized mechanism $\randFunc: \mathcal{D} \rightarrow \mathcal{R}$ is $\varepsilon$-differentially private w.r.t. $\sim$ if for all $D,D'\in\mathcal{D}$ with $D \sim D'$ and for all $S \subseteq \mathcal{R}$,
	\[
		\Pr[\randFunc(D) \in S] \le \exp(\varepsilon) \cdot \Pr[\randFunc(D') \in S],
	\]
	where the probability is over the randomness of $\randFunc$.
\end{definition}
We state the definition in its general form with respect to an arbitrary adjacency relation.
The \emph{adjacency relation} $\sim$ specifies which pairs of inputs should be considered indistinguishable by an observer.
In the classical database setting, the relation captures neighboring databases that differ by a single row.
The Probability $\Pr[\randFunc(D) \in S]$ captures the chance that the mechanism $\randFunc$ produces an output in the set $S$.
Differential privacy bounds how much this probability can change between adjacent inputs.
The \emph{privacy parameter} $\varepsilon$ quantifies the strength of the privacy guarantee.
Smaller values of $\varepsilon$ correspond to stronger privacy.
Choosing $\varepsilon$ involves balancing the level of privacy with the output utility, depending on the privacy requirements of the data and the intended application.

To design differentially private mechanisms, we must understand how much the output of a query can change between two adjacent inputs.
This quantity is called the sensitivity of the query.
\begin{definition}[Sensitivity~\cite{DBLP:journals/fttcs/DworkR14}]
	For $f : \mathcal{D} \rightarrow \mathbf{R}^d$, the $L_1$ sensitivity of $f$ w.r.t. $\sim$ is $\Delta_\sim f = \max_{\substack{D,D'\\D\sim D'}} \|f(D) - f(D')\|_1$.
\end{definition}
Intuitively, the sensitivity measures the maximum possible change in the output between two adjacent inputs.
For instance, a query that counts the number of rows satisfying a certain predicate has sensitivity 1, since modifying or removing a single row can change the count by at most one.
In contrast, a query that sums numerical values can have a much larger sensitivity, depending on the possible ranges of those values.

Once the sensitivity of a query is known, differential privacy can be achieved by adding Laplace noise calibrated to that sensitivity.
\begin{theorem}[Laplace Mechanism~\cite{DBLP:journals/fttcs/DworkR14}]\label{thm:laplace}
	For $f : \mathcal{D} \rightarrow \mathbf{R}^d$, the mechanism $\randFunc(\mathcal{D}) = f(\mathcal{D}) + \eta$ that adds independently generated noise $\eta$ with distribution $\eta_i \sim \Lap(\Delta_\sim f / \varepsilon)$ to each of the $d$ output terms satisfies $\varepsilon$-differential privacy w.r.t. $\sim$.
\end{theorem}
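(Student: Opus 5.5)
The plan is the standard density-ratio argument for the Laplace mechanism, written relative to the general adjacency relation $\sim$. Set $b = \Delta_\sim f/\varepsilon$. If $\Delta_\sim f = 0$, then $f(D)=f(D')$ for all adjacent $D,D'$, so the two output distributions coincide and there is nothing to prove; from now on assume $b>0$. Since the noise coordinates are independent, $\randFunc(D)$ has a density on $\mathbf{R}^d$, namely
\[
	p_D(z) = \prod_{i=1}^d \frac{1}{2b}\exp\!\left(-\frac{|f(D)_i - z_i|}{b}\right).
\]

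Fix $D \sim D'$ and a point $z\in\mathbf{R}^d$, and compare $p_D(z)$ with $p_{D'}(z)$. The normalising constants cancel, leaving
\[
	\frac{p_D(z)}{p_{D'}(z)} = \prod_{i=1}^d \exp\!\left(\frac{|f(D')_i - z_i| - |f(D)_i - z_i|}{b}\right).
\]
By the reverse triangle inequality, each exponent is at most $|f(D)_i - f(D')_i|/b$. Multiplying the factors therefore bounds the ratio by $\exp(\|f(D)-f(D')\|_1/b)$. The sensitivity definition gives $\|f(D)-f(D')\|_1 \le \Delta_\sim f$ because $D\sim D'$, so the ratio is at most $\exp(\Delta_\sim f / b) = \exp(\varepsilon)$. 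Note that the adjacency relation is used only at this step, through the definition of $\Delta_\sim f$. This is why the argument works verbatim for an arbitrary symmetric $\sim$.

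Finally, for any measurable $S\subseteq\mathbf{R}^d$, integrate the pointwise bound $p_D(z) \le \exp(\varepsilon)\, p_{D'}(z)$ over $S$ to obtain $\Pr[\randFunc(D)\in S] \le \exp(\varepsilon)\Pr[\randFunc(D')\in S]$.

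There is no real obstacle here. The only points needing care are the $\Delta_\sim f=0$ case, which is handled separately above, and ensuring the maximum in the sensitivity definition is finite. If it is infinite, the mechanism is undefined, so I would implicitly assume $\Delta_\sim f<\infty$.
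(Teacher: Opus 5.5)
Your proof is correct and is the standard pointwise density-ratio argument (product of Laplace densities, triangle inequality in each coordinate, summing to the $L_1$ sensitivity, then integrating over $S$) from Dwork and Roth, which the paper cites for this theorem instead of proving it. Treating the degenerate case $\Delta_\sim f = 0$ separately and assuming $\Delta_\sim f < \infty$ are sensible points that the cited proof leaves implicit.
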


While the Laplace mechanism establishes privacy guarantees for individual mechanisms, composition theorems describe how these guarantees change when multiple mechanisms are combined.
Under \emph{sequential composition}~\cite{DBLP:journals/fttcs/DworkR14}, publishing the outputs of multiple mechanisms results in an additive privacy loss: combining mechanisms with privacy parameters $\varepsilon_1$ and $\varepsilon_2$ gives an overall privacy guarantee of $(\varepsilon_1 + \varepsilon_2)$ with respect to the same adjacency relation.
As a result, repeatedly releasing noisy versions of the same query gradually weakens privacy, so individual mechanisms must use smaller privacy parameters to stay within a fixed overall budget.
\emph{Post-processing}~\cite{DBLP:journals/fttcs/DworkR14} ensures that any transformation applied to the output of a differentially private mechanism preserves its privacy guarantee.
Once a mechanism has produced a privacy-preserving output, it can be safely transformed without adding any extra privacy loss.
Together, these properties let us reason modularly about privacy in stream-based monitors.

\subsection{Evaluation Models for Runtime Monitoring}

To formalize differential privacy in the context of runtime monitoring, we need a model allowing us to compare different monitor executions.
For this purpose, we introduce the notion of an evaluation model.
Let $\time = \NN$ be the set of discrete timestamps, $\VV_\bot = \RR \cup \{\bot\}$ the set of optional stream values, and $\RR^+$ the set of non-negative real numbers.
\begin{definition}[Evaluation Model~\cite{DBLP:conf/fm/BaumeisterFKS24}]
	An \emph{evaluation model} $\world \in \World$ consists of timed data streams over a set of input stream references $\Iref$ and output stream references $\Oref$.
	It is represented as the combination of a stream map and a timing map:
	\begin{align*}
		\Stream &:= \time \rightarrow \VV_\bot\\
		\streamMap &:= \left(\Iref \uplus \Oref\right) \rightarrow \Stream\\
		\timeMap &:= \time \rightarrow \RR^+\\
		\World &:= \streamMap \times \timeMap.
	\end{align*}
\end{definition}
Intuitively, an evaluation model captures all values produced by the monitor during one execution.
The stream mapping assigns each input and output stream reference to a stream of values indexed by discrete timestamps.
Because the monitor operates asynchronously, a stream may have no new value at a given timestamp.
This absence is represented by the symbol $\bot$.
The time mapping associates each discrete timestamp with its corresponding real-time timestamp.

For convenience, given an evaluation model $\world = (\mathit{streams}, \mathit{times})$, we define $\world(t) := \mathit{times}(t)$ to denote the real-time timestamp of $t \in \time$, and $\world(s) := \mathit{streams}(s)$ to denote the stream assigned to reference $s \in \Iref \uplus \Oref$.

\subsection{Adjacency of Evaluation Models}

To reason about differential privacy in the context of monitor executions, we define a suitable adjacency relation between evaluation models.
Different choices of adjacency correspond to different privacy guarantees and reflect what aspect of the monitored behavior is intended to be protected.
In this work, we adopt an \emph{event-level}~\cite{DBLP:conf/stoc/DworkNPR10} notion of privacy:
Two evaluation models are adjacent if their input streams differ by exactly one event occurring at a single timestamp.
Moreover, we assume that only the content of an event is sensitive, not its timing.
Consequently, adjacent evaluation models share the same timing map, ensuring that temporal information remains unchanged:
\begin{definition}[Event-Level Adjacency]
	Two evaluation models $\world, \world' \in \World$ are considered \emph{event-level adjacent}, if
	\begin{enumerate}
		\item For all $t \in \time$, it holds that $\world(t) = \world'(t)$,
        \item For all $t \in \time$ and $i \in \Iref$, it holds that $\world(i)(t) = \bot \Leftrightarrow \world'(i)(t) = \bot$,
		\item There exists at most one $t \in \time$, such that $\exists i \in \Iref. \world(i)(t) \ne \world'(i)(t)$.
	\end{enumerate}
	Further, we consider them \emph{$s$-distant} if
	$
		\forall t \in \time. \forall i \in \Iref. |\world(i)(t) - \world'(i)(t)| \le s.
	$
\end{definition}
The notion of $s$-distance restricts how much two adjacent evaluation models may differ in their input values: each input can change by at most $s$.
This assumption reflects many realistic scenarios, where input domains are naturally bounded, for instance, sensor readings limited by the hardware, or demographic data such as age constrained within feasible ranges.

Our notion of privacy protects individual datapoints.
For example, it can hide a single detour when tracking a person's car, but it cannot hide the daily commute to work.
This choice is motivated by our approach using static analysis, and stronger notions of adjacency would require significant amounts of noise, making it difficult to produce accurate monitors.
In \Cref{sec:stricter}, we nevertheless explore strategies for supporting stronger notions of privacy within our framework.
\section{Sensitivity Calculations for Stream-Based Languages}\label{sec:dp_lola}

The sensitivity of a function measures how much its output can change for two adjacent inputs.
For stream-based monitors, we analyze how much output streams may differ between two adjacent and valid evaluation models.
We introduce a formal framework for the sensitivity of streams, discuss practical challenges in stream-based computations, and establish static bounds that enable differentially private monitoring.

\subsection{Specifications}
A \emph{specification} $\varphi$ consists of a set of stream equations $\{x := \varphi_x \mid x \in \Oref\}$, where each $\varphi_x$ is a stream expression over $\Iref \uplus \Oref$.
Each output stream is additionally annotated with a \emph{pacing type} that determines when it is evaluated relative to the input streams.
An evaluation model $\world \in \World$ is \emph{valid} for $\varphi$ if every output stream value $\world(x)(t)$ is computed according to the defining equation $\varphi_x$ for all $t \in \time$.

\subsection{Stream Expressions}

Sensitivities are defined recursively over stream expressions.
Expressions may reference a stream $y$, a constant $c$, an offset $o\in\NN$, or be combined using a binary operator~$\circ \in \set{+,-,\,\cdot\,}$:
\begin{align*}
\mathit{exp} ::= \quad&\mathsf{sync}(y) \mid \mathsf{offset}(y,o) \mid \mathsf{hold}(y) \mid \mathsf{aggr}(y,W,f)\\
&\mid \mathsf{default}(\mathit{exp},\mathit{exp}) \mid \mathsf{ite}(\mathit{exp}, \mathit{exp}, \mathit{exp}) \mid \mathit{exp} \circ \mathit{exp} \mid c
\end{align*}
We distinguish between \emph{synchronous accesses}, namely $\mathsf{sync}$ and $\mathsf{offset}$, which access the value of $y$ at the current (resp. offset) discrete timepoint, and \emph{asynchronous accesses}, such as $\mathsf{hold}$, which returns the most recent available value, and $\mathsf{aggr}$, which applies the aggregation function $f\in\set{\Sigma,\mathit{last},\mathit{count}}$ over a window $W$ of values of $y$.
As asynchronous accesses may be undefined at certain time points, we provide the $\mathsf{default}$ operator, which supplies a fallback value whenever an access fails.
\ifthenelse{\boolean{fullversion}}{
    A formal definition of the semantics of stream expressions is given in \Cref{app:semantics}.
}{
    For a formal definition of the semantics of stream expressions, we refer to the full version of this paper~\cite{fullversion}.
}

\subsection{Timing Abstraction}

Since the sensitivity analysis should be independent of the specific stream values but may depend on the timing of events, we separate the timing information of an evaluation model.
For this purpose, we introduce a \emph{pacing model}, which captures only the temporal structure of stream evaluations.
\begin{definition}[Pacing Model]
	A \emph{pacing model} $p \in \pacingModel$ captures the timing of stream evaluations independent of stream values.
	It consists of a pacing and timing map, where $\BB = \set{\bot,\top}$ denotes the set of boolean values:
	\begin{align*}
		\pacingMap &:= \left(\Iref \uplus \Oref\right) \rightarrow \time \rightarrow \BB\\
		\timeMap &:= \time \rightarrow \RR^+\\
		\pacingModel &:= \pacingMap \times \timeMap
	\end{align*}	
\end{definition}
Analogously to the evaluation model, we introduce shorthand notation.
For a pacing model $p = (\mathit{pacings}, \mathit{times})$, we define $p(t) := \mathit{times}(t)$ for a time $t \in \time$, and $p(s) := \mathit{pacings}(s)$ for a stream $s \in \Iref \uplus \Oref$.
Further, we say an evaluation model \emph{behaves according to} a pacing model if all real-time timestamps are equal and a stream has a value exactly if the corresponding pacing is true.

To enable an automatic analysis of specifications, we impose assumptions on the pacing of stream evaluations.
Among other checks, such as dependency and value type analysis, each specification is statically verified to satisfy these pacing constraints before execution.
These constraints guarantee that the monitoring algorithm produces only valid evaluation models and that the resulting monitor operates within bounded memory requirements.
\begin{definition}[Correctly Typed Pacing Model]\label{def:valid_pacing}
	For a pacing model $p \in \pacingModel$ to be considered correctly typed according to a specification $\varphi$, it must hold that	
	\begin{enumerate}
		\item For each synchronous access $\mathsf{sync}$ and $\mathsf{offset}$ from output stream $x \in \Oref$ to stream $y \in \Iref \uplus \Oref$, it must hold that whenever $x$ evaluates, $y$ evaluates as well: $\forall t \in \time. p(x)(t) \Rightarrow p(y)(t)$.
		\item For each sliding window aggregation $\mathsf{aggr}$ from output stream $x \in \Oref$ over a stream $y \in \Iref \uplus \Oref$, it must hold that $x$ is periodic with some period $\delta_x$, i.e. $\forall t \in \time. p(x)(t) \Rightarrow \exists k \in \NN. p(t) = k \cdot \delta_x$.
	\end{enumerate}
\end{definition}

A specification $\varphi$ is \emph{well-typed} if it statically satisfies the pacing constraints above.
For every valid evaluation model $\world \in \World$ of a well-typed specification, there exists a correctly-typed pacing model $p \in \pacingModel$ that $\world$ behaves according to.

\subsection{Per-Event Sensitivity}

To measure how much the values of an output stream can differ between two adjacent evaluation models, we introduce \emph{per-event sensitivity}, which captures the maximum impact a single changed input has on a stream value at a given timestamp, taking into account the timing constraints imposed by a pacing model.
\begin{definition}[Per-Event Sensitivity]\label{def:delta}
	Given a specification $\varphi$, a pacing model $p \in \pacingModel$, and two timestamps $t,t' \in \time$, the per-event sensitivity defines how much the value of a stream $x \in \Iref \uplus \Oref$ at timestamp $t'$ can change, if the inputs at timestamp $t$ are changed by at most $s$.
	\begin{align*}
		\Delta_{t,p,s}^{t'}&: \Iref \uplus \Oref \rightarrow \RR^+\\
		\Delta_{t,p,s}^{t'}(x) &= \begin{cases}
			s & \text{if } x \in \Iref \land t' = t\\
			0 & \text{if } x \in \Iref \land t' \ne t\\
			\Delta_{t,p,s}^{t'}(\varphi_x) & \text{if } x \in \Oref
		\end{cases}
	\end{align*}
	where $\Delta_{t,p,s}^{t'}(\varphi_x)$ computes the sensitivity for the stream expression $\varphi_x$ of $x$.
	Given stream expressions $e$, $e_1$ and $e_2$, the constant $c$, and the stream reference $y$, we define
	\begin{align*}
		\Delta_{t,p,s}^{t'}(e_1 + e_2) &= \Delta_{t,p,s}^{t'}(e_1) + \Delta_{t,p,s}^{t'}(e_2)\\
		\Delta_{t,p,s}^{t'}(c \cdot e) &= |c| \cdot \Delta_{t,p,s}^{t'}(e)\\
		\Delta_{t,p,s}^{t'}(\mathsf{sync}(y)) &= \Delta_{t,p,s}^{t'}(y)\\
		\Delta_{t,p,s}^{t'}(\mathsf{offset}(y, o)) &= \Delta_{t,p,s}^{\olast(y, t', p, o)}(y)\\
		\Delta_{t,p,s}^{t'}(\mathsf{aggr}(y, W, \Sigma)) &= \sum_{t'' \in \windowtimes(y, t', p, W)} \Delta_{t,p,s}^{t''}(y)
	\end{align*}
	where $\olast(y, t, p, o)$ returns the timestamp of the $o$-th most recent value of $y$, and $\windowtimes(y, t, p, W)$ returns timestamps of $y$ from the $W$ sized window behind $t$.
    \ifthenelse{\boolean{fullversion}}{
        Consider \Cref{app:delta} for the full definition with all operators.
    }{
        Consider the full version of this paper~\cite{fullversion} for the full definition with all operators.
    }
\end{definition}
For an input stream, the sensitivity depends on the timestamp $t'$ being considered.
If $t' = t$, the sensitivity is exactly $s$, reflecting the allowed change at that time.
For all other timestamps, the input sensitivity is zero, since only a single input value may be modified.
For an output stream $x$, the sensitivity is determined by its defining stream expression $\varphi_x$ in the specification.
Most operations compute values at the current timestamp, so recursion does not shift the timestamp.
An exception is the \textit{offset}-operator:
The sensitivity at timestamp $t'$ depends on the sensitivity of the value retrieved from the past.
As this is determined by the pacing model, the function $\olast$ identifies the timestamp of the accessed value.
In the case of summation over sliding windows, the sensitivity at time $t'$ accumulates contributions from all aggregated values within the window.

Our intuition of the per-event sensitivity function is captured by this theorem:
\begin{theorem}\label{thm:delta}
	Let $\world, \world' \in \World$ be valid evaluation models for an acyclic specification $\varphi$ which are s-distant, event-level adjacent, and differ at timestamp $t$.
	Then, for each stream $x \in \Iref \uplus \Oref$ and each timestamp $t' \in \time$,
	\[
		|\world(x)(t') -  \world'(x)(t')| \le \Delta_{t,p,s}^{t'}(x),
	\]
	where both evaluation models define a unique and common pacing model $p \in \pacingModel$.
\end{theorem}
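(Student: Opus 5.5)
We prove the statement by well-founded induction on a measure ordering the pairs $(x,t')$, followed by a structural induction over stream expressions. The key observation is that acyclicity of $\varphi$ makes the per-event sensitivity of \Cref{def:delta} a well-defined recursion. We read ``acyclic'' as the dependency graph having no cycles once we exclude edges that move strictly into the past, namely $\mathsf{offset}$ edges with $o\ge 1$ and window aggregations. Along every remaining edge the pair $(x,t')$ either stays at the same timestamp and descends in a topological order of the synchronous dependency graph, or moves to a strictly smaller timestamp. We therefore order pairs lexicographically, first by timestamp and then by topological rank. This order is well-founded because $\time=\NN$ and the graph is finite. Under this order, every recursive call on the right-hand side of \Cref{def:delta} refers to a strictly smaller pair. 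If instead ``acyclic'' is meant literally, so that no stream depends on itself at all, a topological induction on streams alone suffices, with $t'$ universally quantified.

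\textbf{Base cases.} Let $x\in\Iref$. If $t'\ne t$, condition 3 of event-level adjacency gives $\world(x)(t')=\world'(x)(t')$, so the difference is $0$. If $t'=t$, $s$-distance bounds the difference by $s$. If either value is $\bot$, condition 2 makes both $\bot$, and we interpret the difference as $0$. More generally, we first show that both models share the pacing model $p$. For inputs this is condition 2 together with condition 1. For outputs it follows because the pacing of a well-typed specification is determined by the input pacings and the timing map, which coincide.

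\textbf{Inductive step.} For $x\in\Oref$ at $t'$, we prove the stronger claim that for every subexpression $e$ of $\varphi_x$, $|\llbracket e\rrbracket_\world(t')-\llbracket e\rrbracket_{\world'}(t')|\le\Delta^{t'}_{t,p,s}(e)$. We argue by structural induction on $e$ and use the outer hypothesis for stream references.
\begin{itemize}
\item $+$: the bound follows from the triangle inequality.
\item $c\cdot e$: the bound follows from homogeneity of $|\cdot|$.
\item Constants: the difference is $0$.
\item $\mathsf{sync}(y)$: this reduces to the hypothesis for $(y,t')$. Well-typedness (\Cref{def:valid_pacing}, item 1) ensures that $y$ has a value at $t'$ in both models.
\item $\mathsf{offset}(y,o)$: since the pacing is common, both models access the same timestamp $\olast(y,t',p,o)$. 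The hypothesis then applies at that timestamp, or both accesses are undefined.
\item $\mathsf{aggr}(y,W,\Sigma)$: since the timing map is shared, $\windowtimes(y,t',p,W)$ is the same set of timestamps in both models. The difference of the sums is bounded termwise by the triangle inequality.
\item Remaining operators of the full definition: $\mathsf{hold}$ behaves like an offset to the last defined timestamp. For $\mathit{count}$, pacing equality gives sensitivity $0$. $\mathit{last}$ and $\mathsf{default}$ are handled by case distinction. Definedness of the primary expression depends only on $p$, so both models take the same branch, and the bound is that of the branch taken.
\end{itemize}

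\textbf{Main obstacle.} I expect the hardest cases to be the nonlinear ones: general products $e_1\cdot e_2$ and $\mathsf{ite}$. For these the bound cannot come from the sub-sensitivities alone, because the condition of $\mathsf{ite}$ may flip between the two models, and a product of two sensitive terms needs range information. I would first consult the full definition in \Cref{app:delta}.
\begin{itemize}
\item If products there are restricted to one constant factor, the case is the homogeneity case above.
\item Otherwise, I would use bounds derived from value-range types: $|ab-a'b'|\le|a||b-b'|+|b'||a-a'|$ with bounded ranges.
\item For $\mathsf{ite}$, if the condition has sensitivity $0$, both models take the same branch. Otherwise I bound the difference by the width of the union of the branches' value ranges.
\end{itemize}
A secondary subtlety is arguing carefully that both models induce the same pacing model $p$, since otherwise $\olast$ and $\windowtimes$ would not align between them.
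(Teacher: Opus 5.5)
This is essentially the paper's proof: an induction along the acyclic dependency graph with the hypothesis quantified over all timestamps, so that the $\mathsf{offset}$, $\mathsf{hold}$, $\mathit{last}$ and window cases can invoke it at $\olast(y,t',p,o)$ resp.\ at each $t''\in\windowtimes(y,t',p,W)$ of the common pacing model; the triangle inequality and homogeneity for the linear operators; and, for the value-dependent operators, simply that both values lie in $[lb(x),ub(x)]$, which suffices because the paper's extended $\Delta$ is $ub(x)-lb(x)$ whenever $t\rightsquigarrow^x_p t'$ (so your product-rule estimate is unnecessary and, being built from the factors' sensitivities, could exceed that width). The one correction concerns your generalized reading of acyclicity: window edges are not strictly past-moving, since $\windowtimes(y,t',p,W)$ contains $t'$ whenever $y$ is evaluated at $t'$, so they must stay in the graph that defines the topological rank; however, the paper means acyclicity literally (offset self-loops such as the one on \texttt{low} count as cycles), and under that reading your plain topological induction on streams is exactly the paper's argument, with your explicit handling of $\mathsf{default}$ and of $\bot$ values filling in cases the paper leaves implicit.
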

The proof is by structural induction over the defining stream expressions and can be found
\ifthenelse{\boolean{fullversion}}{%
in~\Cref{app:proof_delta}.
}{%
in the full version of this paper~\cite{fullversion}.
}
Note that throughout this chapter, the dependency graph is assumed acyclic.
Dependency cycles are handled in \Cref{sec:segments}.

\subsection{Value-Dependent Operators}\label{sec:value_dependent}

So far, we have assumed that an expression's sensitivity depends only on the sensitivities of the accessed streams.
For some operators, such as multiplication and conditionals, the sensitivity also depends on the actual values.
For instance, in a conditional $\mathsf{ite}(c, e_1, e_2)$, the output may switch between $e_1$ and $e_2$ for neighboring executions, and the resulting sensitivity depends on the maximum difference between the two branches.
Similarly, for a multiplication, the output sensitivity scales with the operand values.
To handle such operators, we extend the analysis to propagate \emph{value ranges} alongside sensitivities.
Each expression is associated with an interval that over-approximates the values it may produce at runtime.
Using this range information, the per-event sensitivity function $\Delta_{t,p,s}^{t'}$ can be extended as follows:
\begin{definition}[Extended Per-Event Sensitivity]  
Let $\Delta_{t,p,s}^{t'}(x)$ be defined as in \Cref{def:delta} for all standard operators.
For operators whose sensitivity depends on the input values, we extend it by
\[
\Delta_{t,p,s}^{t'}(x) := 
\begin{cases}
ub(x) - lb(x) & \text{if $t \rightsquigarrow_p^x t'$} \\
0 & \text{otherwise,}
\end{cases}
\]
where $t \rightsquigarrow_p^x t'$ indicates that a change of inputs at time $t$ can influence the value of stream $x$ at time $t'$ under the pacing model $p$, and $ub(\cdot)$ and $lb(\cdot)$ are the propagated value bounds. Find the definitions
\ifthenelse{\boolean{fullversion}}{%
in \Cref{app:def_bounds}.
}{%
in the full version of this paper~\cite{fullversion}.
}
\end{definition}

Note that sliding-window aggregations do not preserve bounded value ranges, as their output range grows unboundedly even when the output sensitivity is bounded.
As a consequence, we impose an additional typing constraint: value-dependent operators are only permitted on expressions for which a finite value range can be statically derived.
Note that bounded value ranges can always be re-established explicitly using the $\mathsf{clamp}$ operator, which enforces fixed lower and upper bounds on the produced values.

\subsection{Asynchronous Accesses}

Stream-based specifications allow asynchronous stream evaluations, so a single input value may be accessed multiple times by dependent streams.
For some asynchronous operators, such as sliding-window aggregations, well-typed pacing restrictions ensure that the number of contributions of a single input is bounded.
However, the \emph{hold access} operator retrieves the most recent value of another stream, allowing a single input to influence an unbounded number of timestamps.

Consider a stream $c$ that accesses the latest value of $a$ whenever $b$ occurs:
\begin{lstlisting}
output c @b@ := a.hold().defaults(to: 0)
\end{lstlisting}
Because the timing of $a$ and $b$ need not be known in advance, it is impossible to determine how often a single value of $a$ is accessed, and thus how much it may contribute to $c$.

In general, unrestricted hold accesses prevent bounding the total effect of a single input across all timestamps.
However, in practice, these accesses can often be constrained either by bounding the time span during which a value is relevant or by limiting the number of accesses. 
For instance, a bounded hold can be expressed as a sliding-window aggregation, e.g., 
$\mathsf{aggr}(a, 5s, \mathit{last})$
or by explicitly restricting the number of accesses, e.g., 
$\mathsf{hold}_5(a)$ with a newly introduced hold variant.
Once such bounds are made explicit, the contribution of a single input value becomes finite, and a static global sensitivity bound can be derived.

\subsection{Static Bounds}

While the per-event sensitivity $\Delta$ provides a way to quantify the sensitivity at a specific timestamp, it depends on a concrete pacing model 
and assumes knowledge of the timestamp at which the inputs differ, which is both unavailable at runtime.
To address this, we overapproximate the total $L_1$-sensitivity of $\Delta$ accumulated across all timestamps, by deriving a static upper bound for each stream.
The only assumption we make is that the pacing model is well-typed, a property statically verified on the specification.
Under this assumption, the following result establishes the existence of a finite bound for every stream.
\begin{theorem}[Sensitivity Bound]\label{thm:bound}
	Given a well-typed specification $\varphi$ with an acyclic dependency graph without unbounded hold operators and a distance value $s$, it holds that
	\[
		\forall x \in \left(\Iref \uplus \Oref\right). \exists b_{\varphi,x,s}. \forall p \in \pacingModel_\varphi . \forall t \in \time. \left(\sum_{t' \in T_{p,x}} \Delta_{t,p,s}^{t'}(x)\right) \le b_{\varphi,x,s}.
	\]
	where $\pacingModel_\varphi$ denotes all correctly-typed pacing models according to $\varphi$ and $T_{p,x} = \setCond{t' \in \time}{p(x)(t')}$ denotes the timestamps at which stream $x$ is evaluated.
\end{theorem}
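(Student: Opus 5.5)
The plan is to prove a slightly stronger statement by induction along a topological order of the acyclic dependency graph. For every stream $x$ I will construct two quantities independent of $p$ and $t$: a pointwise bound $m_{\varphi,x,s}$ with $\Delta_{t,p,s}^{t'}(x) \le m_{\varphi,x,s}$ for all $t'$, and the total bound $b_{\varphi,x,s}$ on $\sum_{t' \in T_{p,x}} \Delta_{t,p,s}^{t'}(x)$. It also seems useful to track a third quantity, $n_{\varphi,x}$, bounding the number of timestamps $t' \in T_{p,x}$ with $\Delta_{t,p,s}^{t'}(x) > 0$, that is, the number of evaluations of $x$ that a single changed event can influence. Then $b_{\varphi,x,s} := n_{\varphi,x}\cdot m_{\varphi,x,s}$ already suffices, but carrying the sum directly gives tighter constants in the linear cases.

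For inputs the base case is immediate: $m = b = s$ and $n = 1$, since $\Delta$ is nonzero only at $t'=t$, and $t \in T_{p,x}$ only if the input is present. For an output $x$, I proceed by structural induction over $\varphi_x$, proving for each subexpression $e$ a pointwise bound and a bound on the number of $t' \in T_{p,x}$ at which $\Delta^{t'}(e)\neq 0$. The cases are as follows.

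\textbf{Constants, sums and scalar multiplication.} These are trivial: pointwise bounds add or scale, and the support counts add.

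\textbf{$\mathsf{sync}(y)$.} By the first condition of \Cref{def:valid_pacing}, $T_{p,x}\subseteq T_{p,y}$, so the support of $x$ injects into the support of $y$.

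\textbf{$\mathsf{offset}(y,o)$.} The map $t' \mapsto \olast(y,t',p,o)$ is injective on $T_{p,x}\subseteq T_{p,y}$, because it selects the $o$-th predecessor within $T_{p,y}$. Hence each nonzero value of $y$ is hit at most once and the sum is bounded by $b_y$.

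\textbf{Aggregation, the key case.} Here $x$ is periodic with period $\delta_x$, and a value of $y$ at real time $r$ lies in the window of $t'$ only if $p(t') - W < r \le p(t')$. Since evaluation times of $x$ are multiples of $\delta_x$, at most $\lceil W/\delta_x\rceil + 1$ evaluations of $x$ can contain a given $y$-timestamp. Exchanging the order of summation then gives
$\sum_{t'\in T_{p,x}} \Delta^{t'}(x) \le (\lceil W/\delta_x\rceil+1)\, b_y$, with $n_x \le (\lceil W/\delta_x\rceil+1)\, n_y$. For $\mathit{count}$ the sensitivity is zero, since timing is shared between the models. For $\mathit{last}$ a pointwise $m_y$ with the same multiplicity argument works. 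The bounded hold variants are handled the same way, by the window length or the explicit access count.

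\textbf{Value-dependent operators and $\mathsf{default}$.} For $\mathsf{ite}$, multiplication, $\mathsf{clamp}$ and $\mathsf{default}$, the extended definition gives the pointwise bound $ub(x)-lb(x)$, which is finite by the typing constraint. It is nonzero only when $t \rightsquigarrow_p^x t'$, so $b_x := (ub(x)-lb(x))\cdot n_x$, where $n_x$ is the sum of the influence counts of the accessed subexpressions.

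I expect the main obstacle to be making the influence count $n$ rigorous and uniform in $p$. This requires an explicit characterization of $\rightsquigarrow_p^x$ and a check that each operator only multiplies the number of influenced evaluations by a factor determined statically by $\varphi$. The aggregation case is where this is delicate: the periodicity together with the discrete-to-real time map $p(\cdot)$, which is possibly non-injective, must be used carefully. My first attempt would be to define $\rightsquigarrow$ inductively, mirroring \Cref{def:delta}, and to prove by the same induction that $|\{t' \in T_{p,x} : t \rightsquigarrow_p^x t'\}| \le n_{\varphi,x}$. The exclusion of unbounded hold is exactly what makes this count finite. Acyclicity ensures the recursion defining $b$, $m$ and $n$ terminates.
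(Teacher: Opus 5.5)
Your overall route is the paper's. The paper also uses structural induction over the defining expressions, terminating by acyclicity. Its cases are the same as yours: bounds added or scaled in the linear cases, $T_{p,x}\subseteq T_{p,y}$ for $\mathsf{sync}$, and injectivity of $t'\mapsto\olast(y,t',p,o)$ for $\mathsf{offset}$. Window aggregations are handled by exchanging the summation and counting multiplicity; the paper gets $\floor{W/\delta_x}+1$ for its closed windows $[p(t')-W,\,p(t')]$, and your $\lceil W/\delta_x\rceil+1$ is also valid there. Bounded hold gives $b_y\cdot\mathit{bound}$, and value-dependent operators give $n_x\cdot(ub(x)-lb(x))$. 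As you plan, $n_x$ comes from a separately proved lemma that bounds $|\{t'\in T_{p,x} : t\rightsquigarrow_p^x t'\}|$ by the same induction. On your concern about a non-injective time map: the paper does not argue this point. It simply asserts that the evaluation timestamps of $x$ are spaced $\delta_x$ apart in real time.

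One case, however, fails as written. You list $\mathsf{default}$ among the value-dependent operators and bound it pointwise by $ub(x)-lb(x)$, taking finiteness from the typing constraint. In the paper $\mathsf{default}$ is not value-dependent: $\Delta_{t,p,s}^{t'}(\mathsf{default}(e,\mathit{dft}))=\max(\Delta_{t,p,s}^{t'}(e),\Delta_{t,p,s}^{t'}(\mathit{dft}))$, and the bounded-range constraint is not imposed on it. Defaults typically wrap expressions of unbounded range. For example, take $x:=\mathsf{default}(\mathsf{offset}(y,1),0)$ with $y:=\mathsf{aggr}(a,W,\Sigma)$, so that $ub(y)=\infty$. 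Your argument then produces no finite bound for $x$, although the theorem covers this specification and the paper's $\Delta$ gives the bound $b_y$. The repair is immediate. Since the theorem is about $\Delta$ as defined, use $\max \le$ sum to obtain $b_x=b_e+b_{\mathit{dft}}$ and influence count $n_e+n_{\mathit{dft}}$, exactly as for addition. No range information is needed, because adjacent runs share the pacing model and hence agree on whether $e$ is $\bot$.
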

Intuitively, this theorem ensures that each stream $x$ has a finite, statically derivable upper bound $b_{\varphi,x,s}$ on its total $L_1$-sensitivity given an $s$-bounded change to the inputs.
This upper bound can be computed recursively along the dependency graph.
For example, the bound for an input stream is given by $s$, the bound for an addition is the sum of the bounds of its operands, and the bound for a synchronous access coincides with the bound of the accessed stream.
Precise bounds for each operator and the full proof can be found 
\ifthenelse{\boolean{fullversion}}{%
in~\Cref{app:proof_bounds}.
}{%
in the full version of this paper~\cite{fullversion}.
}

With this static bound available, differential privacy can be achieved by injecting appropriately scaled noise into each value of an output stream.
\begin{theorem}
    Given a well-typed specification $\varphi$ with an acyclic dependency graph without unbounded hold operators and a distance value $s$, and $\world \in \World$ a valid evaluation model for $\varphi$.
    Let $\randFunc$ be a randomized mechanism $\randFunc_x : \World \rightarrow \Stream$
	which produces a noisy version of the output stream $x$ by adding independent Laplace noise to each value:
	\[
		\randFunc_x(\world)	:= \lambda t. \world(x)(t) + \eta_{t},
	\]
	where each term $\eta_{t}$ is drawn independently from the Laplace distribution with
	\[
		\eta_t \sim Lap\left(\frac{b_{\varphi,x,s}}{\varepsilon}\right).
	\]
	Then, $\randFunc_s$ satisfies $\varepsilon$-private $s$-distant event-level differential privacy.
\end{theorem}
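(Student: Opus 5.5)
The plan is to reduce the statement to the Laplace mechanism (\Cref{thm:laplace}) by viewing the output stream $x$ as a vector-valued query on evaluation models and bounding its $L_1$-sensitivity with respect to $s$-distant event-level adjacency via \Cref{thm:delta} and \Cref{thm:bound}.

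\textbf{Step 1: fix the pacing model.} Take $\world \sim \world'$ that are $s$-distant and event-level adjacent, and differ (if at all) at a single timestamp $t$. The timing maps coincide, and input presence coincides. The specification is well-typed, and pacing of outputs is determined by input pacing and timing. Hence both models behave according to the same correctly typed pacing model $p \in \pacingModel_\varphi$. In particular, the set $T_{p,x}$ of evaluation timestamps of $x$ is identical in $\world$ and $\world'$. The query $f_x(\world) := (\world(x)(t'))_{t' \in T_{p,x}}$ therefore has a common index set on adjacent inputs. At timestamps outside $T_{p,x}$ both streams are $\bot$; there the mechanism emits $\bot$ or pure noise, which is independent of the data and harmless.

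\textbf{Step 2: bound the $L_1$-sensitivity.} By \Cref{thm:delta}, $|\world(x)(t')-\world'(x)(t')| \le \Delta_{t,p,s}^{t'}(x)$ for every $t'$. Summing over $t' \in T_{p,x}$ and applying \Cref{thm:bound} gives
\[
\|f_x(\world)-f_x(\world')\|_1 \le \sum_{t'\in T_{p,x}} \Delta_{t,p,s}^{t'}(x) \le b_{\varphi,x,s}.
\]
The bound $b_{\varphi,x,s}$ is uniform in $p$ and $t$, so $\Delta_\sim f_x \le b_{\varphi,x,s}$.

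\textbf{Step 3: apply the Laplace mechanism.} Adding independent $\Lap(b_{\varphi,x,s}/\varepsilon)$ noise to each coordinate gives $\varepsilon$-DP by \Cref{thm:laplace}, since using a scale at least $\Delta f/\varepsilon$ only helps.

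\textbf{Main obstacle.} $T_{p,x}$ may be infinite, while \Cref{thm:laplace} is stated for $\RR^d$. I would handle this in one of two ways:
\begin{itemize}
\item Prove the guarantee for every finite prefix $T_{p,x}\cap[0,n]$. The sensitivity bound still holds there, and the DP inequality holds for all cylinder events. It then extends to the generated $\sigma$-algebra by a monotone-class argument.
\item Argue directly with densities. The pointwise ratio of product Laplace densities is at most $\exp(\varepsilon \sum_{t'}|\Delta_{t'}|/b) \le e^\varepsilon$. Only finitely many coordinates can differ, because the sum is bounded; more precisely, terms with $\Delta_{t'}=0$ contribute ratio $1$.
\end{itemize}
Either route needs the common pacing from Step 1, so that the coordinates align.
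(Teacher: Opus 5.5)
Your proposal is correct and follows the paper's proof. The pointwise bound from \Cref{thm:delta} under the common pacing model, summed over $T_{p,x}$ and controlled by \Cref{thm:bound}, gives $L_1$-sensitivity at most $b_{\varphi,x,s}$, and then the Laplace mechanism (\Cref{thm:laplace}) applies. Your extra care about the shared index set and the infinite-dimensional output (finite prefixes plus a monotone-class argument) supplies details the paper skips. Your density-based alternative is mis-justified, though: a bounded sum does not force only finitely many coordinates to differ. Finiteness holds there only because each operator propagates a single input change to finitely many timestamps, so rely on the cylinder-set route.
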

The result shows that by calibrating Laplace noise to the global sensitivity bound $b_{\varphi,x,s}$, we can guarantee $\varepsilon$-differential privacy for all $s$-distant, event-level adjacent executions.
The proof follows immediately from the combination of \Cref{thm:delta} and \Cref{thm:bound} with \Cref{thm:laplace} and can be found
\ifthenelse{\boolean{fullversion}}{%
in~\Cref{app:proof_dp}.
}{%
in the full version of this paper~\cite{fullversion}.
}
While the randomized mechanism described above releases only a single output stream, standard composition theorems of differential privacy ensure that multiple output streams can also be released with appropriately adjusted privacy parameters.
\section{Accurate Monitoring under Differential Privacy}\label{sec:accuracy}

Differential privacy of individual streams can be ensured by appropriately adding noise to their values.
Simply adding noise to all inputs would guarantee privacy, but it can lead to poor utility of the results.
In this section, we present strategies to maintain accurate monitoring while preserving privacy. 

\subsection{Ensuring Privacy for Unbounded Specifications}\label{sec:segments}

\begin{figure}[t]
\begin{subfigure}[b]{0.4\linewidth}
	\centering
	\begin{tikzpicture}[
		font=\footnotesize,
		n/.style={minimum width=2.5mm,inner sep=0.5mm,font=\scriptsize,draw,circle},
		elabel/.style={font=\scriptsize}
	]
		\node[n] (a) {};
		\node[n,right=5mm of a] (b) {};
		\node[n] (c) at ($(a)!0.5!(b) + (0,-6.5mm)$) {};
		\draw[->] (c) -- (a);
		\draw[->] (c) -- (b);
		\node[n,below=3mm of c] (d) {};
		\draw[->] (d) -- (c);
		\node[n,below=4mm of d,xshift=-2.5mm] (e) {};
		\draw[->] (e) -- (d);
		\node[n,below=4mm of d,xshift=2.5mm] (f) {};
		\draw[->] (f) -- (d);
		\draw[->] (f) to[loop right] (f);
		\draw[->] (e) to[loop left] (e);
		\node[n] (g) at ($(f)!0.5!(e) - (0,6.5mm)$) {};
		\draw[->] (g) -- (e);
		\draw[->] (g) -- (f);
		\begin{scope}[on background layer]
			\node[fit=(a) (b),fill=lightgray!50!white,rounded corners,minimum width=2cm] (inputs) {};
			\node[fit=(c) (d),fill=green!20!white,rounded corners,minimum width=2cm] (loop-free) {};
			\node[fit=(e) (f),fill=lightgray!50!white,rounded corners,minimum width=2cm] (loop) {};
			\node[fit=(g),fill=lightgray!50!white,rounded corners,minimum width=2cm] (public) {};
		\end{scope}
		\node[left=1mm of inputs] {Inputs};
		\node[left=1mm of loop-free,align=right] {Private\\Segment};
		\node[left=1mm of loop,align=right] {Post-Processed\\Segment};
		\node[left=1mm of public,align=right] {Public Outputs};
	\end{tikzpicture}
	\caption{Different segments of the example specification.}
	\label{fig:graph_components}
\end{subfigure}
\hfill
\begin{subfigure}[b]{0.45\linewidth}
	\centering
	\begin{tikzpicture}[
		font=\footnotesize,
		n/.style={minimum width=3mm,inner sep=0.5mm,draw,circle},
		elabel/.style={font=\footnotesize},
		sa/.style={shorten >=1.5mm}
	]
	\foreach \i in {0,1,2} {
		\begin{scope}[xshift=\i*1.6cm]
			\node[n,dotted] (a) {};
			\node[n,dotted,right=4mm of a] (b) {};
			\node[n] at ($(a)!0.5!(b) + (0,-9mm)$) (c) {};
			\node[n,below=6mm of c] (d) {};
			\node[n,dotted,below=6mm of d,xshift=-3mm] (e) {};
			\node[n,dotted,below=6mm of d,xshift=3mm] (f) {};
			\ifthenelse{\i=0}{
				\draw[line width=0.8mm,red] ($(a) + (-1mm,-2.2mm)$) -- ++(3mm,0);
				\draw[line width=0.8mm,red] ($(b) + (1mm,-2.2mm)$) -- ++(-3mm,0);
				\draw[->,sa,shorten >=1.5mm] (c) -- (a);
				\draw[->,sa,shorten >=1.5mm] (c) -- (b);
			}{
				\draw[->] (c) -- (a);
				\draw[->] (c) -- (b);
			}
			\ifthenelse{\i=1}{
				\draw[line width=0.8mm,red] ($(c) + (-2mm,-2.2mm)$) -- ++(4mm,0);
				\draw[->,sa] (d) -- (c);
			}{
				\draw[->] (d) -- (c);
			}
			\ifthenelse{\i=2}{
				\draw[line width=0.8mm,red] ($(d) + (-2mm,-2.2mm)$) -- ++(4mm,0);
				\draw[->,sa] (e) -- (d);
				\draw[->,sa] (f) -- (d);
			}{
				\draw[->,shorten >=0.4mm] (e) -- (d);
				\draw[->,shorten >=0.4mm] (f) -- (d);
			}
		\end{scope}
	}
	\end{tikzpicture}
	\caption{Possible placement options for privacy barriers in the private segment.}
	\label{fig:cut_points}
\end{subfigure}
\caption{Dependency Graph examples for the RTLola specification in~\Cref{fig:example_rtlola}.}
\end{figure}

Certain language features, such as unbounded hold operators or circular definitions, can cause a single input event to influence an unbounded number of outputs. 
This makes it impossible to compute a finite global sensitivity for the entire specification directly.
However, every specification can still be made differentially private by controlling where in the specification noise is injected. 
The key idea is to partition the dependency graph of the specification into two segments:

\begin{itemize}
    \item An upper \emph{private segment}, which is acyclic and contains only computations for which the sensitivity can be statically bounded, and
    \item A lower \emph{post-processed segment}, which may include features that induce unbounded sensitivity.
\end{itemize}

The segments are illustrated in \Cref{fig:graph_components} for the specification from \Cref{fig:example_rtlola}.
Noise can be safely injected at the boundaries between the segments, ensuring that all values entering the post-processed segment already satisfy differential privacy. 
Downstream computations, no matter how complex or unbounded, can then be treated as post-processing, which cannot weaken the privacy guarantees.

\subsection{Privacy Barriers}\label{sec:priv_barriers}

Adding noise at the transition between private and post-processed segments ensures differential privacy for any specification.
However, it is not always optimal: injecting noise earlier in the specification can significantly improve the utility of the outputs. 
To formalize where noise can be injected into the specification, we introduce the notion of \emph{privacy barriers}, which separate private inputs from the public outputs in the dependency graph:
\begin{definition}[Privacy Barriers]
Let $\Iref$ be the set of input streams and $\POref \subseteq \Oref$ the set of public output streams. 
A set $\mathcal{C} \subseteq \Iref \uplus \Oref$ is a valid set of \emph{privacy barriers} if every path in the dependency graph from any $i \in \Iref$ to any $o \in \POref$ passes through exactly one $c \in \mathcal{C}$.
\end{definition}
Any valid set of privacy barriers is a valid option for adding noise.
At each barrier, noise is injected, and all downstream computations are treated as post-processing. 
\Cref{fig:cut_points} shows all valid options for the example specification.

A natural goal is to choose a barrier placement that maximizes the utility of the monitored outputs.
However, such an optimal placement does not exist: different choices can favor short-term vs. long-term accuracy, or prioritize the accuracy of different output streams against others.
A placement that is locally optimal for one output stream may be suboptimal for another, leading to inherent trade-offs between streams.
We therefore do not prescribe a single optimal strategy, but instead suggest several heuristics for selecting privacy barriers whose trade-offs can be predicted and reasoned about.

The \emph{input-only} heuristic injects noise at input streams, ensuring that all computations operate on perturbed data.
The \emph{deep} heuristic targets the transition from private to post-processed segments.
This preserves the accuracy of intermediate computations as much as possible, but may require a lot of barriers and, therefore, more noise due to the composition theorem.
The \emph{post-aggregation} heuristic places barriers immediately after the first aggregation.
Compared to adding noise before aggregations, this often reduces the overall noise required since the private input is combined with multiple others before being perturbed.
The \emph{minimal barriers} heuristic selects the smallest set of valid barriers to reduce the number of noise injections. 
Fewer injection points reduce runtime cost but may lead to a higher variance compared to more fine-grained barrier placements.
Importantly, all placement strategies preserve the same differential privacy guarantees.

Beyond these, other heuristics are conceivable, for example, tracking the propagated error symbolically through the specification, similar to approaches in monitoring under uncertainty~\cite{DBLP:conf/rv/FinkbeinerFKK24}.

\subsection{Tree-Based Aggregations}\label{sec:tree_aggregations}

\begin{figure}[t]
	\begin{subfigure}[b]{0.43\linewidth}
	\begin{tikzpicture}[
			xscale=0.9,
			shorten >=0,
			n/.style={draw,circle,inner sep=0.5mm,fill=white},
			m/.style={very thick,n,inner sep=1.1mm,fill=blue!50!cyan,draw=none}
		]
		\foreach \x/\i in {1/0.6,2/2.4,3/3.3,4/5.5,5/6.8} {
			\node[fill=black,inner sep=0.5mm,circle] (ltip\x) at (\i*0.6,0) {};
			\node[n] (tip\x) at ([yshift=3mm]ltip\x) {};
		}

		\draw[->,semithick] (-0.1,0) -- (5.2,0);

		\node[n] (t1) at ($(tip1)!0.5!(tip2) + (0,5mm)$) {};
		\draw (tip1) -- (t1);
		\draw (tip2) -- (t1);
		\node[n] (t2) at ($(tip3)!0.5!(tip4) + (0,5mm)$) {};
		\draw (tip3) -- (t2);
		\draw (tip4) -- (t2);
		\node[n] (t11) at ($(t1)!0.5!(t2) + (0,5mm)$) {};
		\draw (t1) -- (t11);
		\draw (t2) -- (t11);
		\draw[dotted] (tip5) -- ++(5mm,4.8mm) node[n,draw=gray,solid] (g1) {};
		\draw[dotted] (g1) -- ++(3mm,-2mm);

		\begin{scope}[on background layer]
		\node[fit=(ltip1) (tip1),fill=cyan!40!white,rounded corners] {};
		\node[fit=(ltip2) (tip2),fill=cyan!40!white,rounded corners] {};
		\node[fit=(ltip3) (tip3),fill=cyan!40!white,rounded corners] {};
		\node[fit=(ltip4) (tip4),fill=cyan!40!white,rounded corners] {};
		\node[fit=(ltip5) (tip5),fill=cyan!40!white,rounded corners] {};
		\node[m] at (tip5) {};
		\node[m] at (t11) {};
		\end{scope}

		\fill[fill=black] ($(tip5) + (0,-5.5mm)$) -- ++(1mm,-1.5mm) -- ++(-2mm,0) -- cycle;

	\end{tikzpicture}	
	\vspace{2.4mm}
	\caption{Discrete aggregation over all values.}
	\label{fig:tree_all}
	\end{subfigure}
	\hfill
	\begin{subfigure}[b]{0.5\linewidth}
	\begin{tikzpicture}[
			xscale=0.9,
			shorten >=0,
			n/.style={draw,circle,inner sep=0.5mm,fill=white},
			m/.style={very thick,n,inner sep=1.1mm,fill=blue!50!cyan,draw=none}
		]
		\foreach \i in {0.6,2.5,3.2,6.4,6.8,7.2,9.3} {
			\node[fill=black,inner sep=0.5mm,circle] at (\i*0.5,0) {};
		}

		\draw[->,semithick] (-0.1,0) -- (6.25,0);
		\foreach \i in {2,3,4,5} {
			\draw[very thick,shorten >=0,xshift=-1cm,yshift=1.2mm,opacity=0] ($({\i+0.05},0)$) coordinate(x1\i) |- ($({\i+1-0.05},-0.3)$) coordinate(x2\i) -- ($({\i+1-0.05},0)$) coordinate(x3\i);
			\begin{scope}[on background layer]
			\fill[fill=cyan!40!white] (x1\i) |- (x2\i) -- (x3\i) -- ($(x1\i)!0.5!(x3\i) + (0,3mm)$) coordinate(tip\i);
			\end{scope}
			\node[n] (tip\i) at (tip\i) {};
		}

		\def\i{1}
		\draw[very thick,shorten >=0,xshift=-1cm,yshift=1.2mm,opacity=0] ($({\i+0.05},0)$) coordinate(x1\i) |- ($({\i+1-0.05},-0.3)$) coordinate(x2\i) -- ($({\i+1-0.05},0)$) coordinate(x3\i);
		\begin{scope}[on background layer]
		\fill[lightgray] (x1\i) |- (x2\i) -- (x3\i) -- ($(x1\i)!0.5!(x3\i) + (0,3mm)$) coordinate(tip\i);
		\end{scope}
		\node[n] (tip\i) at (tip\i) {};

		\def\i{6}
		\draw[very thick,shorten >=0,xshift=-1cm,yshift=1.2mm,draw=gray,opacity=0] ($({\i+0.05},0)$) coordinate(x1\i) |- ($({\i+1-0.05},-0.3)$) coordinate(x2\i) -- ($({\i+1-0.05},0)$) coordinate(x3\i);
		\begin{scope}[on background layer]
		\fill[fill=lightgray!50!white] (x1\i) |- (x2\i) -- (x3\i) -- ($(x1\i)!0.5!(x3\i) + (0,3mm)$) coordinate(tip\i);
		\end{scope}
		\node[n,draw=gray] (tip\i) at (tip\i) {};

		\draw[decorate,decoration={brace,mirror,amplitude=1.5mm},transform canvas={yshift=-4mm}] (x11) -- node[below=1.5mm] {\footnotesize 1 s} ([xshift=-1mm]x11 -| x12);
		\draw[decorate,decoration={brace,mirror,amplitude=1.5mm},transform canvas={yshift=-4mm}] (x12) -- node[below=1.5mm] {\footnotesize 4 s} ([xshift=-1mm]x11 -| x16);

		\node[n,yshift=5mm] (t1) at ($(tip1)!0.5!(tip2)$) {};
		\draw (tip1) -- (t1);
		\draw (tip2) -- (t1);
		\node[n,yshift=5mm] (t2) at ($(tip3)!0.5!(tip4)$) {};
		\draw (tip3) -- (t2);
		\draw (tip4) -- (t2);
		\node[n,yshift=5mm,draw=gray] (t3) at ($(tip5)!0.5!(tip6)$) {};
		\draw[dotted] (tip5) -- (t3);
		\draw[dotted] (tip6) -- (t3);
		\node[n,yshift=5mm] (t11) at ($(t1)!0.5!(t2)$) {};
		\draw (t1) -- (t11);
		\draw (t2) -- (t11);

		\begin{scope}[on background layer]
		\node[m] at (tip2) {};
		\node[m] at (t2) {};
		\node[m] at (tip5) {};
		\end{scope}

		\fill[fill=black] ([yshift=-5mm,xshift=0.4mm]x35) -- ++(1mm,-1.5mm) -- ++(-2mm,0) -- cycle;
	\end{tikzpicture}
	\vspace{1mm}
	\caption{
		Sliding window aggregation over 4 seconds.
		A new value is produced every second.
	}
	\label{fig:tree_sliding}
	\end{subfigure}
	\caption{
		Tree-Based Aggregation Examples.
	}
\end{figure}

In \Cref{sec:segments}, we showed that we exclude cycles in the dependency graph from the privacy segment as they introduce unbounded sensitivity.
In practice, however, circular definitions are most commonly used to aggregate all values of a stream over time.
Beyond sliding-window aggregations, many stream-based languages support discrete all-aggregations, which aggregate all stream values since the start of the monitor.
The most common loops in specifications can therefore automatically be rewritten as all-aggregations.


To introduce a barrier at an all-aggregation, noise must be added to each released aggregate in order to preserve differential privacy.
However, independently perturbing each aggregated result leads to unbounded privacy loss, since each private value contributes to all future aggregations.
A standard solution to this problem is provided by tree-based mechanisms for differential privacy under continual observation~\cite{DBLP:conf/stoc/DworkNPR10,DBLP:journals/tissec/ChanSS11,DBLP:conf/ndss/PerrierAK19,DBLP:conf/ccs/0001C0SC0LJ21,DBLP:journals/pvldb/QardajiYL13,DBLP:conf/aistats/Cardoso022}.
In these approaches, noisy intermediate aggregates are precomputed and reused across multiple releases.
Rather than repeatedly accessing the private data, aggregates are constructed from these perturbed subaggregates.
Concretely, the stream is partitioned into intervals organized in a binary tree, where each node represents an aggregate over a fixed interval and is perturbed only once.
Aggregates at any time step are obtained by combining the noisy values of a small number of nodes in the tree whose intervals form a partition of the required range.
Several approaches have been proposed to handle unbounded time with these tree-based mechanisms~\cite{DBLP:journals/tissec/ChanSS11,DBLP:conf/ccs/0001C0SC0LJ21}.
In this work, we adopt a construction based on a geometric series that distributes the privacy budget across the infinite height of the tree.

Particular to our setting is that aggregations are not restricted to input streams, where only a single event must be hidden.
Aggregations may also be applied to output streams, where temporal operators have already propagated a single change to multiple stream values.
Our tree-based construction accommodates this, as it only reasons about the total $L_1$-sensitivity of the aggregated stream.
Formal details and proofs are provided
\ifthenelse{\boolean{fullversion}}{%
in \Cref{app:trees}.
}{%
in the full version of this paper~\cite{fullversion}.
}

\Cref{fig:tree_all} illustrates this construction for a discrete all-aggregation.
Each leaf of the tree corresponds to a value of the aggregated stream.
At the position marked with the black triangle, the aggregation over all values is computed by combining the precomputed noisy aggregate over the first four values (the blue node at the top of the tree) with the noisy aggregate of the fifth value at the current position.

In contrast to all-aggregations, whose aggregation is done over a growing prefix of the stream, sliding-window aggregations operate over a time window that shifts with the current evaluation point.
The pacing type requirement in \Cref{def:valid_pacing} ensures that aggregation results are released at a fixed frequency.
This restriction was originally introduced to guarantee that monitoring can be performed using finite memory.
To this end, values are first aggregated into buckets, where the bucket size is determined by the aggregating stream's frequency and the length of the window~\cite{Schwenger_2022}.
We can improve the utility of sliding window aggregations by adopting the tree-based approach, where the buckets form the leaves of the tree, as illustrated in \Cref{fig:tree_sliding}.
To compute, for example, the aggregation over the last four seconds, one sums the noisy values in the blue-marked nodes of the tree.
Using this tree-based construction, in the example, each private value is only released three times, compared to four times without the tree-based approach when each aggregated result is perturbed independently.

\section{Stricter Privacy Guarantees}\label{sec:stricter}

Our approach assumes that the values at the individual input stream positions are independent,  but this assumption does not hold in all applications.
For example, consecutive sensor readings, such as GPS coordinates, are often correlated.
In such cases, standard event-level differential privacy underestimates the required amount of noise.
One way to address this is to adopt $w$-consecutive differential privacy ($w$-DP)~\cite{DBLP:journals/pvldb/KellarisPXP14}, which hides the effect of up to $w$ consecutive input values.
If correlations are limited to $w$ consecutive values, w-DP provides meaningful privacy guarantees.
Extending our approach to $w$-DP is straightforward, since our mechanisms already handle correlated outputs: we can apply the same techniques to $w$-sized groups of input values.

Another complementary strategy is to use a preprocessing monitor.
In this approach, a first monitor transforms correlated inputs into independent values or aggregated summaries. These independent streams are then fed into a second, privacy-preserving monitor that ensures differential privacy.
For example, the first monitor could extract individual trips from raw GPS data.
The second monitor then computes the private outputs over these independent summaries.
This strategy of using a preprocessing monitor can even ensure user-level privacy~\cite{DBLP:conf/stoc/DworkNPR10} by ensuring the second privacy-preserving monitor receives a single event per user.
\section{Implementation and Evaluation}\label{sec:evaluation}

This section evaluates the practical applicability of our approach.
We first describe the implementation of the privacy analysis within the RTLola framework and then assess the impact of different design choices using synthetic examples and a case study on monitoring public transportation data.

\subsection{Implementation}

\begin{figure}[t]
	\begin{minipage}[b]{0.45\linewidth}
		\begin{tabular}{lrr}
\toprule
Specification & no-priv & priv \\
\midrule
Fraud Detection \ifthenelse{\boolean{fullversion}}{(\ref{app:fraud_detection})}{} & 3\,ms & 7\,ms \\
Geofence Small \cite{DBLP:conf/cav/BaumeisterFKLMST24} & 118\,ms & 278\,ms \\
Geofence Large \cite{DBLP:conf/cav/BaumeisterFKLMST24} & 594\,ms & 1368\,ms \\
Peak Detection \cite{DBLP:conf/cav/BaumeisterFSST20} & 3\,ms & 8\,ms \\
Public Transport \ifthenelse{\boolean{fullversion}}{(\ref{app:public_transportation})}{} & 5\,ms & 12\,ms \\
Driving Emissions \cite{DBLP:conf/tacas/BiewerFHKSS21} & 123\,ms & 247\,ms \\
Sensor Validation \cite{DBLP:conf/cav/BaumeisterFSST20} & 2\,ms & 7\,ms \\
\bottomrule
\end{tabular}

		\vspace{4mm}
		\captionof{table}{Analysis runtime comparison for different specifications.}
		\label{tab:analysis_runtime}
	\end{minipage}
	\hfill
	\begin{minipage}[b]{0.495\linewidth}
		\includegraphics[width=\linewidth,alt={
			A line plot showing the window size on the x axis (1 to 15) and the variance of the davg stream on the y axis (log scale).
			It compares the only-inputs, regular, and tree-based methods for different numbers of values per bucket (1, 10, 100), with the lines spreading apart as the window size increases.
		}]{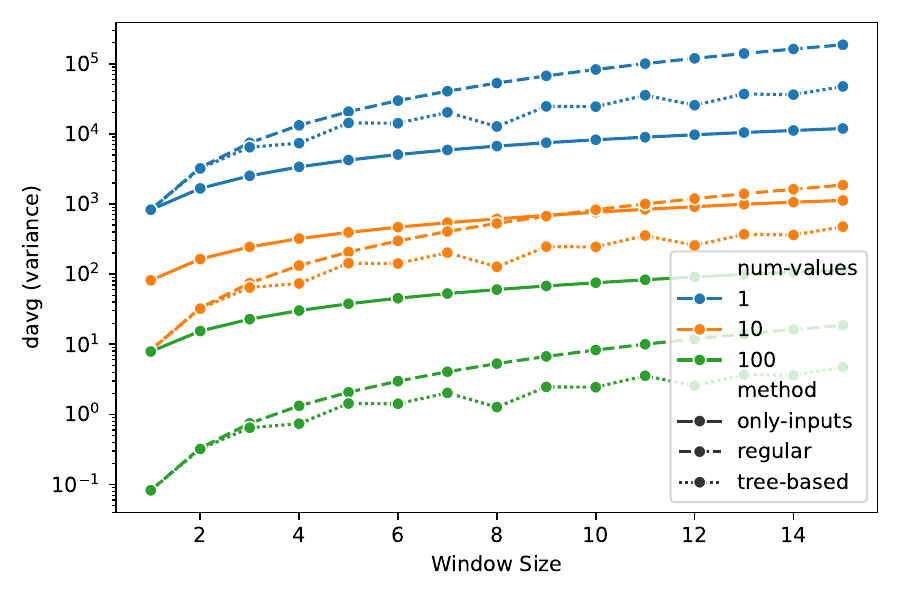}	
		\captionof{figure}{Output variances for different sliding window aggregations.}
		\label{fig:aggregation_utility}
	\end{minipage}
\end{figure}

Our implementation extends the RTLola frontend~\cite{DBLP:conf/fm/BaumeisterFKS24} with a privacy analysis that is executed after the dependency analysis.
This analysis identifies all valid privacy barriers in the dependency graph and applies the heuristics described in \Cref{sec:priv_barriers} to select the barriers at which noise is injected.
Using propagated sensitivity and range information, noise expressions are inserted into the intermediate representation of the specification, and tree-based aggregation is translated into corresponding RTLola constructs.
This way, no modifications to the existing backend implementations~\cite{DBLP:conf/cav/BaumeisterCFS25,DBLP:conf/cav/FaymonvilleFSSS19} are required for private monitoring.

We evaluate the runtime of the privacy analysis.
The results are summarized in \Cref{tab:analysis_runtime}, where the \emph{priv} column reports the maximum runtime across all heuristics.
The measurements were obtained on a system with a 13th Gen Intel Core i7-1355U processor over 50 runs.
We observe that the runtime more than doubles for all specifications.
This increase is primarily due to the fact that, following the privacy analysis, which introduces new expressions and streams in the specification, we rerun earlier stages such as dependency analysis and type checking.
Despite this overhead, all runtimes remain well below a few seconds, rendering the additional cost negligible for an analysis performed once prior to monitor deployment.

\subsection{Impact of Tree-Based Aggregations}

Next, we assess the impact of the tree-based aggregation on the accuracy of the output using the example shown in \Cref{fig:example_rtlola}.
Accuracy is quantified by computing, for each timestamp, the variance across 100 runs of the private monitor ($\varepsilon = 1$) on the same input trace and then averaging this variance across all timestamps.
Lower variance indicates higher accuracy of the output.

We compare how accuracy is influenced by the structure of the aggregation window.
First, we vary the number of buckets in the window.
Increasing the number of buckets causes individual values to contribute to multiple aggregates, which in turn requires more perturbation to ensure privacy.
Second, we vary the number of values per bucket.
More values per bucket allow individual values to be more effectively obscured by others, thereby reducing the variance.

We compare three perturbation strategies.
As a baseline, we add noise directly to each input before aggregation.
We then consider the basic sensitivity-based perturbation described in \Cref{sec:dp_lola} and the tree-based aggregation approach from \Cref{sec:tree_aggregations}.
\Cref{fig:aggregation_utility} shows that for small windows, all three methods achieve similar utility.
As the window size grows, the differences become more pronounced, with the tree-based approach yielding significantly lower variance than the sensitivity-based method.
When each bucket contains only a single value, directly perturbing the inputs yields the highest accuracy.
However, this approach rapidly degrades in utility as the number of values per bucket increases.

\subsection{Case Study}

\begin{figure}[t]
	\begin{subfigure}{0.495\linewidth}
	\includegraphics[width=\linewidth,alt={
		Three histograms stacked vertically for each bus line: night bus, university bus and city bus.
		All show the time of day at the x-axis and the number of events on the y-axis.
		The night bus has fewer number of events (< 16 per bucket) , and only at nighttime hours.
		The university bus has a lot of events (> 60 per bucket) between 7:00 and 18:00.
		The city bus also has a lot of events (> 50 per bucket) during daytime hours, with a peek at noon.
	}]{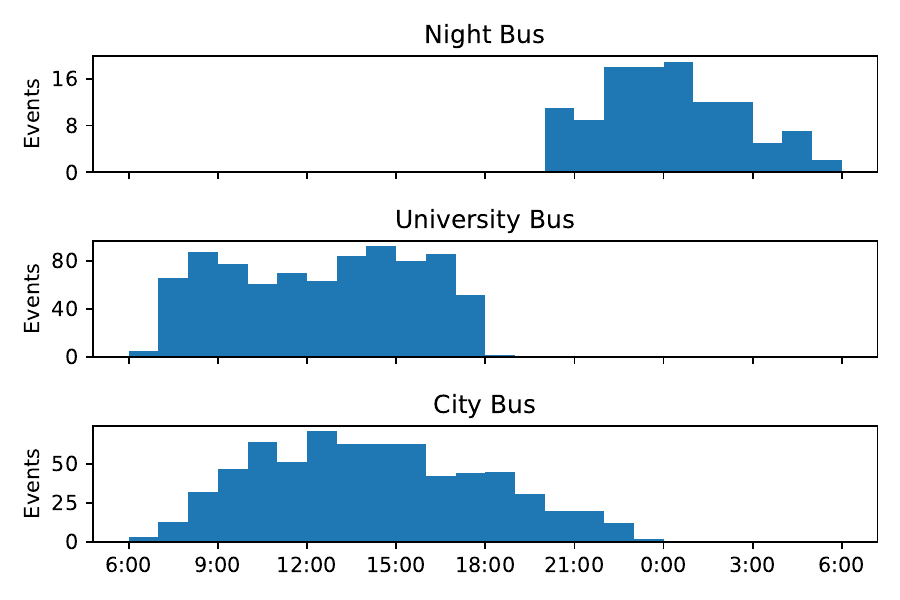}	
	\caption{Histogram of the number of events per hour per bus line.}
	\label{fig:case_study_histogram}
	\end{subfigure}	
	\hfill
	\begin{subfigure}{0.495\linewidth}
	\includegraphics[width=\linewidth,alt={
		A line plot showing the average crowdedness of three bus lines (night bus, university bus, and city bus) over the hour of the day.
		For university and city bus the peak is at daytime hours, and for the party bus the peak is at midnight.
		The plot has an error bar which is small for the university bus, slightly larger for the city bus, and large for the night bus.
	}]{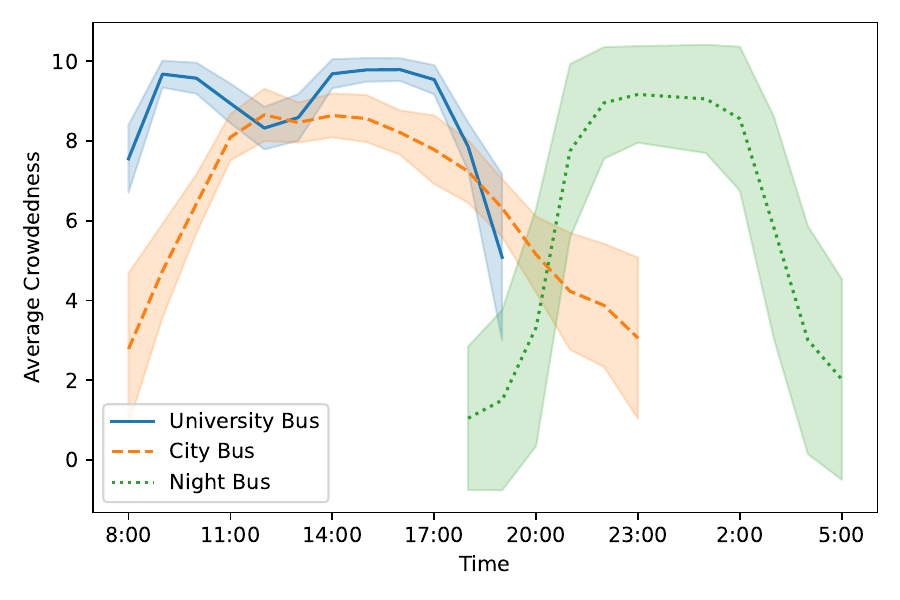}	
	\caption{The average crowdedness per hour output by the differentially private monitor.}
	\label{fig:average_crowdedness}
	\end{subfigure}	
	\caption{
		Results from monitoring synthetic public transportation data.
	}
\end{figure}

To demonstrate the applicability of our approach in a realistic setting, we present a case study from the public transportation domain.
We implemented a system that crowdsources information about public transportation usage by allowing users to report trips on specific bus lines.
The reported events are processed by an RTLola monitor, which aggregates the data and displays anonymized usage statistics to users.
The aggregated results provide insights into temporal usage patterns, such as the expected level of crowding on a bus line at different times of day.
Since the collected data reflects individual movement behavior, it is inherently privacy-sensitive.
By relying on our privacy-preserving framework, all privacy guarantees are enforced automatically by the frontend analysis, allowing the application to be implemented without manual reasoning about privacy concerns.
\ifthenelse{\boolean{fullversion}}{
    The corresponding RTLola specification is provided in \Cref{app:public_transportation}.
}{
    The corresponding RTLola specification is provided in the full version of this paper~\cite{fullversion}.
}

We evaluate our case study using synthetic data.
Specifically, we consider three bus lines and generate a varying number of users depending on the time of day, as shown in \Cref{fig:case_study_histogram}.
The first is a university bus, highly frequented but only during a shorter period of the day.
The second is a city bus, which is frequently used throughout daytime hours.
The third is a night bus, which also exhibits high usage per bus, but with significantly fewer buses and users overall.

In \Cref{fig:average_crowdedness}, we show the average crowdedness per time of day as reported by the private monitor ($\varepsilon = 1$, deep heuristic), with the standard deviation (SD) shown as error bars.
The SD indicates that the confidence of the monitor output depends on the number of users.
To estimate the SD introduced by differential privacy, we execute the monitor 1500 times and calculate statistics over these runs.
During high-traffic daytime hours, the differentially private estimates closely match the true crowdedness.
At night, the smaller number of participants leads to a higher SD.
But even during nighttime hours with fewer participants, the monitor output closely approximates the actual crowdedness.
\section{Conclusion}

In this paper, we have presented a method for integrating differential privacy into runtime monitoring to ensure system correctness while preserving the privacy of sensitive system data.
Our approach analyzes the specifications and automatically enforces privacy guarantees without requiring any user interaction.
As a result, users without privacy expertise can benefit from differential privacy without having to reason about sensitivities, noise calibration, or privacy budgets, which is particularly challenging given temporal dependencies in stream-based specifications.
Finally, privacy is enforced in a utility-aware manner by injecting noise at carefully selected points in the specification, thereby preserving the accuracy of monitoring outcomes while still providing rigorous privacy guarantees.
We have implemented our approach in the stream-based monitoring framework RTLola, demonstrating that privacy-preserving runtime monitoring can be achieved automatically and with minimal loss of accuracy.

Our work lays the foundation for deploying runtime monitoring in privacy-critical domains where such techniques have traditionally been difficult to apply.
This includes areas such as medical or financial systems, where continuous monitoring is essential, but the underlying data is highly privacy-critical.
We believe that making privacy guarantees an integral part of runtime monitoring is a step toward making runtime verification applicable in a broader range of real-world, safety-critical and privacy-sensitive systems.

\subsubsection{Acknowledgments.} This work was partially supported by the German Research Foundation (DFG) as part of TRR 248 (No.~389792660) and PreCePT (No.~521273327), and by the European Research Council (ERC) Grant HYPER (No.~101055412).

\subsubsection{Disclosure of Interests.}
The authors have no competing interests to declare that are relevant to the content of this article.

\subsubsection{Data-Availability Statement.}
The artifacts and resources associated with this article are accessible via
\url{https://doi.org/10.5281/zenodo.19694096}.

\bibliographystyle{splncs04}
\bibliography{bibliography.bib}

\ifthenelse{\boolean{fullversion}}{
    \clearpage
    \appendix
    \section{Operator Semantics}\label{app:semantics}

Given an evaluation model $\world$ and a corresponding pacing model $p$, we define the semantics of the operators as follows.
Note that for the semantics and the following proofs, we assume there is a separate stream definition for each subexpression.
\begin{align*}
	\world(e_1 + e_2)(t) &= \world(e_1)(t) + \world(e_2)(t)\\
	\world(e_1 \cdot e_2)(t) &= \world(e_1)(t) \cdot \world(e_2)(t)\\
	\world(\sync(y))(t) &= \world(y)(t)\\
	\world(\offset(y, o))(t) &= \begin{cases}
		\world(y)(\olast(y, t, o, p)) & \text{if $\olast(y, t, o, p)$ exists}\\
		\bot & \text{otherwise}
	\end{cases}\\
	\world(\mathsf{hold}(y))(t) &= \begin{cases}
		\world(y)(\olast(y, t, 0, p)) & \text{ if $\olast(y, t, 0, p)$ exists}\\
		\bot & \text{otherwise}
	\end{cases}\\
	\world(\mathsf{defaults}(e, dft))(t) &= \begin{cases}
		\world(e)(t) & \text{if $\world(e)(t) \ne \bot$}\\
		\world(dft)(t) & \text{otherwise}
	\end{cases}\\
	\world(\mathsf{ite}(c, e_1, e_2))(t) &= \begin{cases}
		\world(e_1)(t) & \text{if $\world(c)(t) = \top$}\\
		\world(e_2)(t) & \text{otherwise}
	\end{cases}\\
	\world(\mathsf{aggr}(y, W, f))(t) &= f(\setCond{\world(y)(t')}{t' \in \windowtimes(y, t, p, W)})\\
	\world(\mathsf{hold}_{bound}(y))(t) &= \begin{cases}
		\world(y)(\olast(y, t, 0, p)) & \substack{\text{if $\olast(y, t, 0, p)$ exists}\\\land holdn(x, y, t, p) \le bound}\\
		\bot & \text{otherwise}
	\end{cases}
\end{align*}
We use the following predicate to determine how often a specific value of $y$ is accessed from $x$:
\begin{align*}
	holdn(x, y, t, p) &= \left|\setCond{t' \in \time}{p(x)(t') \land t' \le t \land \olast(y, t', 0, p) = \olast(y, t, 0, p)}\right|
\end{align*}

\clearpage
\section{Definition $\Delta$}\label{app:delta}

Given stream expressions $e$, $e_1$ and $e_2$, the constant $c$, and the stream reference $y$, we define
    \begin{align*}
    \Delta_{t,p,s}^{t'}(e_1 + e_2) &= \Delta_{t,p,s}^{t'}(e_1) + \Delta_{t,p,s}^{t'}(e_2)\\
    \Delta_{t,p,s}^{t'}(c \cdot e) &= |c| \cdot \Delta_{t,p,s}^{t'}(e)\\
    \Delta_{t,p,s}^{t'}(c) &= 0\\
    \Delta_{t,p,s}^{t'}(\mathsf{sync}(y)) &= \Delta_{t,p,s}^{t'}(y)\\
    \Delta_{t,p,s}^{t'}(\mathsf{offset}(y, o)) &= \Delta_{t,p,s}^{\olast(y, t', p, o)}(y)\\
    \Delta_{t,p,s}^{t'}(\mathsf{aggr}(y, W, \Sigma)) &= \sum_{t'' \in \windowtimes(y, t', p, W)} \Delta_{t,p,s}^{t''}(y)\\
	\Delta_{t,p,s}^{t'}(\mathsf{aggr}(y, W, last)) &= \begin{cases}
		\Delta_{t,p,s}^{\olast(y, t', 0, p)}(y) & \text{if $\olast(y, t', 0, p) \ge t' - W$}\\
		0 & \text{otherwise}
	\end{cases}\\
    \Delta_{t,p,s}^{t'}(\mathsf{aggr}(y, W, \#)) &= 0\\
	\Delta_{t,p,s}^{t'}(\mathsf{hold}(y)) &= \Delta_{t,p,s}^{\olast(y, t', 0, p)}(y)\\
	\Delta_{t,p,s}^{t'}(\mathsf{hold}_{bound}(y)) &= \begin{cases}
		\Delta_{t,p,s}^{\olast(y, t', 0, p)}(y) & \text{if } holdn(x, y, t', p) \le bound\\
		0 & \text{otherwise}
	\end{cases}\\
	\Delta_{t,p,s}^{t'}(\mathsf{default}(e, dft)) &= \max(\Delta_{t,p,s}^{t'}(e), \Delta_{t,p,s}^{t'}(dft))\\
\Delta_{t,p,s}^{t'}(x) &= 
\begin{cases} 
ub(x) - lb(x) & \text{if $t \rightsquigarrow_p^x t'$} \\
0 & \text{otherwise,}
\end{cases}\quad\substack{\text{for value-dependent}\\\text{operators}}
\end{align*}
with
\begin{align*}
	\olast(y, t, p, o) &= \max\nolimits_o \setCond{t' \in \time}{t' < t \land p(y)(t')}\\
	\windowtimes(y, t, p, W) &= \setCond{t' \in \time}{p(t) - W \le p(t') \le p(t) \land p(y)(t')}.
\end{align*}
$\rightsquigarrow$ is defined in \Cref{app:def_influence}.

\section{Bounds of Expressions}\label{app:def_bounds}

The \emph{lower bound} of a stream is defined as
\begin{align*}
	lb : \Iref \uplus \Oref \rightarrow \RR \cup \set{-\infty}\\
	lb(x) = \begin{cases}
		-\infty & \text{if } x \in \Iref\\
		lb(\varphi_x) & \text{if } x \in \Oref,
	\end{cases}
\end{align*}	
and recursively on expressions as
\begin{align*}
	lb(\sync(y)) &= lb(y)\\	
	lb(\offset(y, \_)) &= lb(y)\\	
	lb(f_1 + f_2) &= lb(f_1) + lb(f_2)\\
	lb(f_1 \cdot f_2) &= min(ub(f_1) \cdot ub(f_2), lb(f_1) \cdot ub(f_2),\\
					& \qquad ub(f_1) \cdot lb(f_2), lb(f_1) \cdot ub(f_2))\\
	lb(\mathsf{aggr}(y, W, S, \Sigma)) &= -\infty\\
	lb(\mathsf{aggr}(y, W, S, \#)) &= 0\\
	lb(\mathsf{ite}(c, f_1, f_2)) &= min(lb(f_1), lb(f_2))\\
	lb(\mathsf{clamp}(f, nlb, nub)) &= nlb.
\end{align*}

The \emph{upper bound} of a stream is defined as
\begin{align*}
	ub : \Iref \uplus \Oref \rightarrow \RR \cup \set{\infty}\\
	ub(x) = \begin{cases}
		\infty & \text{if } x \in \Iref\\
		ub(\varphi_x) & \text{if } x \in \Oref,
	\end{cases}
\end{align*}	
and recursively on expressions as
\begin{align*}
	ub(\sync(y)) &= ub(y)\\	
	ub(\offset(y, \_)) &= ub(y)\\	
	ub(f_1 + f_2) &= ub(f_1) + ub(f_2)\\
	ub(f_1 \cdot f_2) &= \max(ub(f_1) \cdot ub(f_2), lb(f_1) \cdot ub(f_2),\\
				& \qquad ub(f_1) \cdot lb(f_2), lb(f_1) \cdot ub(f_2))\\
	ub(\mathsf{aggr}(y, W, S, \Sigma)) &= \infty\\
	ub(\mathsf{aggr}(y, W, S, \#)) &= 0\\
	ub(\mathsf{ite}(c, f_1, f_2)) &= max(ub(f_1), ub(f_2))\\
	ub(\mathsf{clamp}(f, nlb, nub)) &= nub.
\end{align*}

\section{Proof for \Cref{thm:delta}}\label{app:proof_delta}

\begin{proof}
	Proof by structural induction over the defining stream expression of $x$.

	\paragraph{Base Case:} If $x$ is an input stream, then, if $t' = t$, $|\world(x)(t') - \world'(x)(t')| \le s$ by assumption ($s$-distant), and $\Delta_{t,p,s}^{t'}(x) = s$ by definition of $\Delta$.
	If $t' \ne t$, then $|\world(x)(t') - \world'(x)(t')| = 0$ by assumption (event-level adjacent and differ at timestamp $t$), and $\Delta_{t,p,s}^{t'}(x) = 0$ by definition of $\Delta$.

	\paragraph{Induction Hypothesis:} Assume that for all subexpressions $e$ of the stream expression for $x$, for every timestamp $t'$, it holds that
	\[
		|\world(x)(t') - \world'(x)(t')| \le \Delta_{t,p,s}^{t'}(x).
	\]

	\paragraph{Induction Step:} Consider the possible operations in the stream expression:
	\begin{itemize}
    \item \textbf{Addition:} $e = e_1 + e_2$
    \begin{align*}
	|\world(e)(t') - \world'(e)(t')| &= \bigl|(\world(e_1)(t') + \world(e_2)(t'))\\
	&\quad - (\world'(e_1)(t') + \world'(e_2)(t'))\bigr| \\
	&= \bigl|(\world(e_1)(t') - \world'(e_1)(t'))\\
	&\quad+ (\world(e_2)(t') - \world'(e_2)(t'))\bigr|\\
    &\le |\world(e_1)(t') - \world'(e_1)(t')|\\
	& \quad + |\world(e_2)(t') - \world'(e_2)(t')| &&\text{Triangle inequality}\\
	&\le \Delta_{t,p}^{t'}(f_1) + \Delta_{t,p}^{t'}(f_2) && \text{Induction hypothesis}\\
	&= \Delta_{t,p}^{t'}(f) && \text{Def. $\Delta$}
	\end{align*}

    \item \textbf{Scaling by constant:} $e = c \cdot e'$
    \begin{align*}
    |\world(e)(t') - \world'(e)(t')| &= |c \cdot (\world(e')(t') - \world'(e')(t'))| \\
	&= |c| \cdot |\world(e')(t') - \world(e')(t')| && \text{absolute value}\\
	&\le |c| \cdot \Delta_{t,p,s}^{t'}(e') && \text{Induction hypothesis}\\
	&= \Delta_{t,p,s}^{t'}(e) && \text{Def. $\Delta$}
	\end{align*}

    \item \textbf{Offset access:} $e = \mathsf{offset}(y, o)$
    \begin{align*}
    |\world(e)(t') - \world'(e)(t')| &= |\world(y)(\olast(X, t', p, o))\\
	&\quad - \world'(y)(\olast(y, t', p, o))| &&\text{behaves according to $p$}\\
	&\le \Delta_{t,p,s}^{\olast(y, t', o, p)}(X) &&\text{Induction hypothesis}\\
	&= \Delta_{t,p,s}^{t'}(e) && \text{Def. $\Delta$}
	\end{align*}

    \item \textbf{Synchronous access:} $e = \mathsf{sync}(y)$
    \begin{align*}
    |\world(e)(t') - \world(e)(t')| &= |\world(y)(t') - \world'(y)(t')|\\
	&\le \Delta_{t,p,s}^{t'}(y) &&\text{Induction hypothesis}\\
	&= \Delta_{t,p,s}^{t'}(e) &&\text{Def. $\Delta$}
	\end{align*}
    \item \textbf{Window access (sum):} $e = \mathsf{aggr}(y, W, \sigma)$
    \begin{align*}
    &|\world(e)(t') - \world(e)(t')| \\
	&= \Big| \sum_{t'' \in \windowtimes(y, t', p, W)} \world(y)(t'')\\
	&\quad - \sum_{t'' \in \windowtimes(y, t', p, W)} \world'(y)(t'') \Big| && \text{behaves according to $p$}\\
	&= \Big| \sum_{t'' \in \windowtimes(y, t', p, W)} \world(y)(t'') - \world'(y)(t'') \Big| && \text{\shortstack{$\windowtimes$ is finite\\($p$ is well-typed)}}\\
	&\le \sum_{t'' \in \windowtimes(X, t', p, W)} |\world(X)(t'') - \world'(y)(t'') |&&\text{absolute value}\\
	&\le \sum_{t'' \in \windowtimes(y, t', p, W)} \Delta_{t,p,s}^{t'}(y) && \text{Inducation hypothesis}\\
	&= \Delta_{t,p,s}^{t'}(e) && \text{Def. $\Delta$}
	\end{align*}
    \item \textbf{Window access (count):} $e = \mathsf{aggr}(y, W, \#)$
    \begin{align*}
    |\world(e)(t') - \world'(e)(t')| &= \Big| \sum_{t'' \in \windowtimes(y, t', p, W)} 1\\
	&\quad - \sum_{t'' \in \windowtimes(y, t', p, W)} 1\Big| && \text{behaves according to $p$}\\
	&= 0 = \Delta_{t,p,s}^{t'}(e) &&\text{Def. $\Delta$}
	\end{align*}

	\item \textbf{Window access (last):} $e = \mathsf{aggr}(y, W, last)$
	\begin{align*}
	|\world(e)(t') - \world'(e)(t')| &= |\world(y)(\olast(y,t',0,p)) - \world'(y)(\olast(y,t',0,p))\\
	&\le \Delta_{t,p,s}^{\olast(y,t',0,p)}(y) && \hspace{-2.5cm}\text{Induction hypothesis}\\
	&= \Delta_{t,p,s}^{\olast(y,t',0,p)}(e) && \hspace{-2.5cm}\text{Def. $\Delta$}
	\end{align*}

	\item \textbf{Hold:} $e = \mathsf{hold}(y)$. The same as for last aggregation:
	\begin{align*}
	|\world(e)(t') - \world'(e)(t')| &= |\world(y)(\olast(y,t',0,p)) - \world'(y)(\olast(y,t',0,p))\\
	&\le \Delta_{t,p,s}^{\olast(y,t',0,p)}(y) && \hspace{-2.5cm}\text{Induction hypothesis}\\
	&= \Delta_{t,p,s}^{\olast(y,t',0,p)}(e) && \hspace{-2.5cm}\text{Def. $\Delta$}
	\end{align*}
\end{itemize}

\noindent
For the extension of $\Delta$ presented in \Cref{sec:value_dependent}, it holds that
\begin{align*}
	\world(e)(t') \ge lb(e) \qquad \text{and} \qquad \world'(e)(t') \le ub(e).
\end{align*}
From this, it follows that
\begin{align*}
|\world(e)(t') - \world'(e)(t')| \le ub(x) - lb(x).
\end{align*}

\noindent
The recursion ends because the dependency graph is acyclic.
\end{proof}

\section{Definition $\rightsquigarrow$}\label{app:def_influence}

A change in the inputs at time $t$ influences the value of $x$ at time $t'$ if
\begin{align*}
	t \rightsquigarrow^x_p t' &:= \begin{cases}
		t = t' & \text{if $x \in \Iref$}\\
		t \rightsquigarrow^{\varphi_x}_p t' & \text{if $x \in \Oref$}
	\end{cases}\\
	t \rightsquigarrow^{\sync(y)}_p t' &:= t \rightsquigarrow^y_p t'\\
	t \rightsquigarrow^{\offset(y, o)}_p t' &:= t \rightsquigarrow^y_p \olast(y, t', o, p)\\
	t \rightsquigarrow^{\mathsf{hold}(y)}_p t' &:= t \rightsquigarrow^y_p \olast(y, t', 0, p)\\
	t \rightsquigarrow^{e_1 \circ e_2}_p t' &:= t \rightsquigarrow^{e_1}_p t' \lor t \rightsquigarrow^{e_2}_p t' \quad \text{for all binary operators $\circ$}\\
	t \rightsquigarrow^{\mathsf{ite}(c, e_1, e_2)}_p t' &:= t \rightsquigarrow^{c}_p t' \lor t \rightsquigarrow^{e_1}_p t' \lor t \rightsquigarrow^{e_2}_p t'
\end{align*}

\begin{theorem}
	Given an acyclic specification $\varphi$ without unbounded hold operators, then it holds that
	\[
		\exists n_x . \forall p \in \pacingModel_\varphi . \forall t \in \time. |\setCond{t' \in T_e}{t \rightsquigarrow_p^x t'}| \le n_x
	\]
	the number of timepoints of $x$ that are influenced by a change of an input at any timepoint is bounded by a constant $n_x$.
\end{theorem}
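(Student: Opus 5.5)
The plan is to prove the claim by well-founded induction along the dependency graph, processing streams in a topological order; this order exists because $\varphi$ is acyclic. For each stream $x$ and each subexpression $e$ of a stream definition, I will construct an explicit constant $n_e$ that depends only on $\varphi$. It must satisfy, for every $p \in \pacingModel_\varphi$ and every $t$, that $|\setCond{t' \in T_{p,x}}{t \rightsquigarrow_p^e t'}| \le n_e$. Here $T_{p,x}$ is the set of evaluation timestamps of the stream $x$ whose definition contains $e$; I read the statement's $T_e$ as this set. I then set $n_x := n_{\varphi_x}$. The influence relation $\rightsquigarrow$ is only listed for some operators in \Cref{app:def_influence}. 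For the remaining ones I use the analogous clauses: $\mathsf{aggr}(y,W,f)$ and $\mathsf{hold}_{bound}(y)$ are influenced exactly when some accessed timestamp of $y$ is influenced, $\mathsf{default}$ is treated like a binary operator, and constants are never influenced. Since $\lceil\cdot\rceil$ is not defined in the paper, I write rounding up as $\floor{\cdot}+1$.

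The case analysis proceeds as follows.
\begin{itemize}
\item \emph{Input stream:} only $t'=t$ is influenced, so $n_x = 1$.
\item \emph{Constant:} $n_c = 0$.
\item \emph{Synchronous access $\sync(y)$:} the influenced timestamps are a subset of those of $y$, since $T_{p,x} \subseteq T_{p,y}$ by \Cref{def:valid_pacing}. Hence $n \le n_y$.
\item \emph{Offset $\offset(y,o)$:} again $T_{p,x}\subseteq T_{p,y}$. On timestamps of $y$, the map $t' \mapsto \olast(y,t',p,o)$ is injective, because it shifts back by exactly $o$ positions along the ordered set $T_{p,y}$. So every influenced $y$-timestamp is the preimage of at most one $t'$, which gives $n \le n_y$.
\item \emph{Binary operators, $\mathsf{ite}$ and $\mathsf{default}$:} the influenced set is a union, so its size is at most the sum, e.g.\ $n_c+n_{e_1}+n_{e_2}$.
\item \emph{Bounded hold $\mathsf{hold}_{bound}(y)$:} by the semantics of $holdn$, each timestamp $u$ of $y$ is returned at most $bound$ times. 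Hence $n \le bound \cdot n_y$.
\end{itemize}
Unbounded $\mathsf{hold}$ is excluded by hypothesis, and this is exactly the case where no such injectivity or multiplicity bound is available.

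The main obstacle is the sliding-window case $\mathsf{aggr}(y,W,f)$, where the counting must go through real time. Fix an influenced timestamp $u$ of $y$. It lies in $\windowtimes(y,t',p,W)$ only if $p(u) \le p(t') \le p(u)+W$. By the periodicity condition in \Cref{def:valid_pacing}, $p(t')$ is a multiple of $\delta_x$ whenever $x$ evaluates, so at most $\floor{W/\delta_x}+1$ real-time values are possible. I also need that distinct evaluation timestamps of $x$ have distinct real times. For this I will use (and, if needed, state as an assumption on timing maps) that $\timeMap$ is strictly monotone. This gives $n \le (\floor{W/\delta_x}+1)\cdot n_y$. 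For $f = \#$ no influence arises at all, and for $f=\olast$ the same bound applies a fortiori.

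Finally, every constant constructed above depends only on $\varphi$ (through window sizes, periods, offsets and hold bounds) and not on $p$ or $t$. The recursion terminates by acyclicity. Together this yields the uniform bound $n_x$.
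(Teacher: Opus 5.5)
Your proposal is correct and follows essentially the same route as the paper: induction over the stream expressions (terminating by acyclicity), with a union bound for binary operators, $T_{p,x}\subseteq T_{p,y}$ for synchronous access, and injectivity of $\olast$ on $T_{p,y}$ for offsets. You also spell out the sliding-window and bounded-hold cases, which the paper only defers to the analogous sensitivity-bound argument, using the same multipliers $\floor{W/\delta_x}+1$ and $bound$; and you rightly make explicit that the window count tacitly needs the time map to be strictly monotone.
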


\begin{proof}
	Proof by structural induction over stream-expression $x$:

	\paragraph{Base Case:} If $x$ is an input stream, there is exactly one time point $t'$ where $t = t'$.

	\paragraph{Induction Hypothesis:} Assume that for all subexpressions $e$ of the stream expression for $x$, it holds that
	\[
		\exists n_e . \forall p \in \pacingModel_\varphi . \forall t \in T_e. |\setCond{t' \in T_e}{t \rightsquigarrow_p^e t'}| \le n_e.
	\]

	\paragraph{Induction Step:} Consider the possible operations in the stream expression:

	\begin{itemize}
		\item \textbf{Binary operators:} $e = e_1 \circ e_2$. Then we set $n_e = n_{e_1} + n_{e_2}$ and it holds that
		\begin{align*}
			|\setCond{t' \in T_x}{t \rightsquigarrow_p^e t'}| &= |\setCond{t' \in T_x}{t \rightsquigarrow_p^{e_1} t' \lor t \rightsquigarrow_p^{e_2} t'}|\\
			&\le |\setCond{t' \in T_x}{t \rightsquigarrow_p^{e_1} t'}|\\
			&\quad + |\setCond{t' \in T_x}{t \rightsquigarrow_p^{e_2} t'}|\\
			&\le n_{e_1} + n_{e_2} = n_e
		\end{align*}
		\item \textbf{Synchronous access:} $e = \sync(y)$. Then we set $n_e = n_y$ and it holds that
		\begin{align*}
			|\setCond{t' \in T_x}{t \rightsquigarrow_p^e t'}| &= |\setCond{t' \in T_x}{t \rightsquigarrow_p^{y} t'}|\\
			&\le |\setCond{t' \in T_y}{t \rightsquigarrow_p^{y} t'}| && \text{$p$ well-typed ($T_x \subseteq T_y$)}\\
			&\le n_y = n_e
		\end{align*}
		\item \textbf{Offset access:} $e = \offset(y, o)$. Then we set $n_e = n_y$ and it holds that
		\begin{align*}
			|\setCond{t' \in T_x}{t \rightsquigarrow_p^e t'}| &= |\setCond{t' \in T_x}{t \rightsquigarrow_p^{y} \olast(y, t', o, p)}|\\
			&= |\setCond{t' \in T_y}{t \rightsquigarrow_p^{y} t'}|\\
			&= n_y = n_e
		\end{align*}
		Because $p$ is well-typed, for every $t'' \in T_y$, there exists only one $t' \in T_x$ for which $\olast(y, t', o, p) = t''$.
	\end{itemize}
	The proofs for the other operators follow a very similar structure to the proof in \Cref{app:proof_bounds}.
	The recursion ends because the dependency graph is acyclic.
\end{proof}

\section{Proof for \Cref{thm:bound}}\label{app:proof_bounds}

\begin{proof}
	Proof by structural induction over stream-expression $x$:

	\paragraph{Base Case:} If $x$ is an input stream, then we set $b = s$, and it holds that:
	\begin{align*}
		\sum_{t' \in T_x} \Delta_{t,p,s}^{t'} (I) &= \sum_{t' \in T_x} \begin{cases}
			s & \text{if } t' = t\\0 & \text{otherwise}
		\end{cases}\\
		&\le s = b.
	\end{align*}

	\paragraph{Induction Hypothesis:} Assume that for all subexpressions $e$ of the stream expression for $x$, it holds that
	\[
		\exists b_e. \forall t. \left(\sum_{t' \in T_x} \Delta_{t,p,s}^{t'} e\right) \le b_e.
	\]

	\paragraph{Induction Step:} Consider the possible operations in the stream expression:

	\begin{itemize}
		\item \textbf{Addition:} $e = e_1 + e_2$. Then we set $b_e = b_{e_1} + b_{e_2}$ and it holds that
		\begin{align*}
			\sum_{t' \in T_x} \Delta_{t,p,s}^{t'} e &= \sum_{t' \in T_x} \left(\Delta_{t,p,s}^{t'}(e_1) + \Delta_{t,p,s}^{t'}(e_2)\right) &&\text{Def. $\Delta$}\\
			&= \sum_{t' \in T_x} \Delta_{t,p,s}^{t'}(e_1) + \sum_{t' \in T_x} \Delta_{t,p,s}^{t'}(e_2) &&\text{$\Delta$ non-negativ}\\
			&\le b_{e_1} + b_{e_2} &&\text{Induction Hypothesis}\\
			&= b_e
		\end{align*}
		\item \textbf{Scaling by constant:} $e = c \cdot e'$. Then we set $b_e = |c| \cdot b_{e'}$ and it holds that
		\begin{align*}
			\sum_{t' \in T_x} \Delta_{t,p,s}^{t'} e &= \sum_{t' \in T_x} \Delta_{t,p,s}^{t'}(c \cdot e')\\
			&= \sum_{t' \in T_x} |c| \cdot \Delta_{t,p,s}^{t'}(e') && \text{Def. $\Delta$}\\
			&= |c| \cdot \sum_{t' \in T_x} \Delta_{t,p,s}^{t'}(e')\\
			&\le |c| \cdot b_{e'} = b_e &&\text{Induction Hypothesis}
		\end{align*}
	\item \textbf{Synchronous access:} $e = \mathsf{sync}(y)$. Then we set $b_e = b_y$ and it holds that
	\begin{align*}
			\sum_{t' \in T_x} \Delta_{t,p,s}^{t'} e &= \sum_{t' \in T_x} \Delta_{t,p,s}^{t'}(\mathsf{sync}(y))\\
			&= \sum_{t' \in T_x} \Delta_{t,p,s}^{t'}(y) &&\text{Def. $\Delta$}\\
			&= \sum_{t' \in T_y} \Delta_{t,p,s}^{t'}(y) &&\text{$p$ well-typed ($T_x \subseteq T_y$)}\\
			&\le b_y = b_e && \text{Induction Hypothesis}
	\end{align*}
	\item \textbf{Offset access:} $e = \offset(y, o)$. Then we set $b_e = b_y$ and it holds that
	\begin{align*}
		\sum_{t' \in T_x} \Delta_{t,p,s}^{t'}(e) &= \sum_{t' \in T_x} \Delta_{t,p,s}^{t'}(\offset(y, o))\\
		&= \sum_{t' \in T_x} \Delta_{t,p,s}^{\olast(y, t', p, o)}(y) && \text{Def. $\Delta$}\\
		&= \sum_{t' \in T_x} \Delta_{t,p,s}^{t'}(y) &&\text{$\olast$ is a bijection}\\
		&= \sum_{t' \in T_y} \Delta_{t,p,s}^{t'}(y) &&\text{$p$ well-typed ($T_x \subseteq T_y$)}\\
		&= b_y = b_e &&\text{Induction Hypothesis}
	\end{align*}
	$\olast$ is a bijection because of the assumption that $p$ is correctly typed.

	\vspace{3mm}
	\item \textbf{Window Aggregation (sum):} $e = \mathsf{aggr}(y, W, \Sigma)$.
	Because we know $p$ is well-typed, there exists a frequency $\delta_x$ of the stream containing $e$.
	Then we set $b_e = (\floor{\frac{W}{\delta_x}} + 1) \cdot b_y$ and it holds that:
	\begin{align*}
		\sum_{t' \in T_x} \Delta_{t,p,s}^{t'} (e) &= \sum_{t' \in T_x} \Delta_{t,p,s}^{t'}(\mathsf{aggr}(y, W, \Sigma))\\
		&= \sum_{t' \in T_x} \sum_{t'' \in \windowtimes(y, t', p, W)} \Delta_{t,p,s}^{t''}(y) &&\text{Def. $\Delta$}\\
		&= \sum_{t' \in T_y} \Delta_{t,p,s}^{t'}(y) \cdot m(t')
	\end{align*}
	where $m(t') = |\setCond{t''\in T_x}{t'\in \windowtimes(y, t'', p, W)}|$.
	We claim that
	\[
		\forall t' \in T_y. m(t') \le \floor{\frac{W}{\delta_x}} + 1.
	\]
	and then
	\begin{align*}
		\sum_{t' \in T_y} \Delta_{t,p,s}^{t'}(y) \cdot m(t') &\le \sum_{t' \in T_y} \left(\Delta_{t,p,s}^{t'}(y) \cdot \left(\floor{\frac{W}{\delta_x}} + 1\right)\right) &&\text{Claim}\\
		&= \left(\floor{\frac{W}{\delta_x}} + 1\right) \cdot \sum_{t' \in T_y} \Delta_{t,p}^{t'}(y) &&\text{Def. $\Sigma$}\\
		&\le \left(\floor{\frac{W}{\delta_x}} + 1\right) \cdot b_y  = b_e &&\text{Ind. Hypothesis}
	\end{align*}
	\end{itemize}

	\noindent
	It remains to show that our claim $m(t') \le \floor{\frac{W}{\delta}} + 1$ holds.
	$m(t')$ counts how many values of x at timestamp $t'' \in T_x$ aggregate over a window that contains the value of y at timestamp $t' \in T_y$.

	The aggregations of $x$ that include the value $y$ at timestamp $t'$ are in the real-time window between $p(t)$ and $p(t) + W$.
	Because $p$ is well-typed (and the aggregation $e$ is contained in the stream-expression for $x$), it holds that
	\[
		T_x = \setCond{t' \in \time}{\exists k. p(Y)(t') = k\cdot \delta_x}.
	\]
	Because the timestamps in $T_x$ are spaced by $\delta_x$ in real time, any interval of length $W$ can contain at most 
	\[
	\floor{\frac{W}{\delta_x}} + 1
	\] 
	timestamps of $T_x$.
	Therefore, for every $t' \in T_y$, we have
	\[
	m(t') = |\{ t'' \in T_x \mid t' \in \windowtimes(y, t'', p, W) \}| \le \floor{\frac{W}{\delta_x}} + 1,
	\]
	as required.

	\begin{itemize}
		\item \textbf{Window Aggregation (last):} $e = \mathsf{aggr}(y, W, last)$.
		Then we set $b_e = (\floor{\frac{W}{\delta_x}} + 1) \cdot b_y$ and it holds that:
		\begin{align*}
			\sum_{t' \in T_x} \Delta_{t,p,s}^{t'}(e) &= \sum_{t' \in T_x} \Delta_{t,p,s}^{t'}(\mathsf{aggr}(y, W, last))\\
			&\le \sum_{t' \in T_x} \Delta_{t,p,s}^{t'}(\mathsf{aggr}(y, W, \Sigma)) && \text{$\Sigma$ contains last}\\
			&\le \left(\floor{\frac{W}{\delta_x}} + 1\right) \cdot b_y  = b_e &&\text{Proof for $\Sigma$}
		\end{align*}

		\item \textbf{Bounded hold:} $e = \mathsf{hold}_{bound}(y)$. Then we set $b_e = b_y \cdot bound$ and it holds that:
		\begin{align*}
		\sum_{t' \in T_x} \Delta_{t,p,s}^{t'}(e)
		&= \sum_{t' \in T_x} \Delta_{t,p,s}^{t'}(\mathsf{hold}_{bound}(y))\\
		&\le \sum_{t' \in T_x} \begin{cases}
			\Delta_{t,p,s}^{\olast(y, t', 0, p)}(y) & \text{if } holdn(x, y, t') \le bound\\
			0 & \text{otherwise}
		\end{cases}\\
		&\le \sum_{t' \in T_y} \Delta_{t,p,s}^{t'}(y) \cdot m(t')\\
		&\le b_y \cdot m(t') &&\hspace{-3cm}\text{Induction hypothesis}\\
		&\le b_y \cdot bound &&\hspace{-3cm}\text{$m(t') \le bound$}
		\end{align*}
		where
		\[
		m(t) = |\setCond{t' \in T_x}{\olast(y, t', 0, p) = t \land holdn(x, y, t', p) \le bound}|
		\]
		is bounded by $bound$, because there can be at most $bound$ distinct timepoints $t' \in T_x$ that map to the same $t$ before the hold counter $holdn(x, y, t', p)$ exceeds $bound$.
	\end{itemize}

	\noindent
	For the extension presented in \Cref{sec:value_dependent}, we have shown in \Cref{app:def_influence} that 
	\[
		\exists n_x . \forall p \in \pacingModel_\varphi . \forall t \in \time. |\setCond{t' \in T_e}{t \rightsquigarrow_p^x t'}| \le n_x.
	\]
	Then, 
	\begin{align*}
		\sum_{t' \in T_x} \Delta_{t,p,s}^{t'}(e) &= \sum_{t' \in T'} ub(x) - lb(x)\\
		\intertext{where $T' = \setCond{t' \in T_x}{t \rightsquigarrow_p^x t'}$}
		&= |T'| \cdot (ub(x) - (lb(x)))\\
		&\le n_x \cdot (ub(x) - (lb(x)))
	\end{align*}
\end{proof}

\section{Proof for Differential Privacy}\label{app:proof_dp}

\begin{proof}
Let $\world,\world' \in \World$ be $s$-distant, event-level adjacent evaluation models, differing at timestamp $t$.
By Theorem~\ref{thm:delta}, for every timestamp $t'$,
\[
|\world(x)(t') - \world'(x)(t')| \le \Delta_{t,p,s}^{t'}(x),
\]
where $p$ is the common pacing model.

Summing over all timestamps at which $x$ is evaluated and applying Theorem~\ref{thm:bound} yields
\[
\sum_{t' \in T_{p,x}} |\world(x)(t') - \world'(x)(t')| 
\le \sum_{t' \in T_{p,x}} \Delta_{t,p,s}^{t'}(x) \le b_{x,s}.
\]

Thus, the total $L_1$-sensitivity of the released stream is bounded by $b_{x,s}$.
The result now follows directly from \Cref{thm:laplace}: adding independent Laplace noise with scale $b_{x,s}/\varepsilon$ to each output value ensures $s$-distant, $\varepsilon$-event-level differential privacy.
\end{proof}

\section{Tree-Based Aggregation}\label{app:trees}

\begin{remark}
	For this section we assume that we aggregate over a stream $x$ which has a value at every timepoint.
	If the original stream does not, we introduce an auxiliary stream by deterministically replacing each $\bot$ with the neutral element of the aggregation function.
	This transformation is independent of the data and does not affect differential privacy.
\end{remark}

The nodes of the tree partition the timepoints into intervals.
Let $\mathcal{I}_k$ be a partition of the timepoints into the nodes in level $k$ of the tree:
\[
	\mathcal{I}_k = \set{I_{k,1},I_{k,2},\ldots}.
\]
For each level $k$ of the tree, we define a separate mechanism, which returns all the aggregates for each interval on level $k$ given a stream $x\in \Iref \uplus \Oref$ in an evaluation model $\world \in \World$:
\[
	M_{k,x}(\world) = \left(\sum_{t\in I_{k,j}}\world(x)(t)\right)_{j=1}^\infty
\]
and we define $\tilde{M}_k(x) = M_k(x) + \eta_{k}$ as a perturbed version where $\eta_{k,j} \sim Lap(\frac{b_{x,s}}{\varepsilon_k})$ and $b_{x,s}$ is the sensitivity bound obtained for the stream $x$ through \Cref{thm:bound}.

This mechanism releases all nodes at level $k$ at once.

\begin{theorem}
	Given a valid evaluation model $\world$ the mechanism $\tilde{M}_{k,x}(\world)$ satisfies $s$-distant, $\varepsilon_k$-event-level differential privacy.
\end{theorem}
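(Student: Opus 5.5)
The plan is to reduce the statement to the Laplace mechanism (\Cref{thm:laplace}) by bounding the $L_1$-sensitivity of the infinite vector $M_{k,x}(\world)$ under $s$-distant event-level adjacency by $b_{x,s}$. Fix $s$-distant, event-level adjacent valid evaluation models $\world,\world'$ that differ at timestamp $t$. Both share the common pacing model $p$, and by the remark they may be assumed to produce a value of $x$ at every timepoint (with $\bot$ replaced by the neutral element).

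\textbf{Step 1 (termwise bound).} For each interval $I_{k,j}$, the triangle inequality gives
\[
\Bigl|\sum_{t'\in I_{k,j}}\world(x)(t')-\world'(x)(t')\Bigr| \le \sum_{t'\in I_{k,j}}|\world(x)(t')-\world'(x)(t')| .
\]

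\textbf{Step 2 (summing over the partition).} The intervals $I_{k,j}$, $j\ge 1$, form a partition $\mathcal{I}_k$ of the timepoints, so every timepoint lies in exactly one interval. Summing over $j$ therefore gives
\[
\|M_{k,x}(\world)-M_{k,x}(\world')\|_1 \le \sum_{t'\in T_{p,x}}|\world(x)(t')-\world'(x)(t')| .
\]
Timepoints at which $x$ was originally $\bot$ contribute $0$, since the neutral element is the same in both models.

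\textbf{Step 3 (invoking the earlier theorems).} Bound each summand with \Cref{thm:delta} by $\Delta^{t'}_{t,p,s}(x)$, and bound the resulting sum with \Cref{thm:bound} by $b_{x,s}$. This is exactly the chain used in the proof of the differential privacy theorem for the per-value noise mechanism. Since $b_{x,s}$ does not depend on $p$ or $t$, it is a global $L_1$-sensitivity bound for $M_{k,x}$.

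\textbf{Step 4 (applying the Laplace mechanism).} Apply \Cref{thm:laplace} with noise $\Lap(b_{x,s}/\varepsilon_k)$ added independently to each coordinate.

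The main obstacle is that the output is infinite-dimensional, whereas \Cref{thm:laplace} is stated for $\mathbf{R}^d$. I would handle this in one of two ways.

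The first option is to note that only finitely many coordinates differ. By \Cref{thm:bound} the change is summable, and the influenced timepoints are bounded in number. Coordinates with identical values have identical output distributions, so the density ratio is a finite product. That product is bounded by $\exp(\varepsilon_k\|\cdot\|_1/b_{x,s})\le\exp(\varepsilon_k)$.

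The second option, which I would fall back on, is to apply \Cref{thm:laplace} to every finite prefix of coordinates and then pass to cylinder sets. This gives the inequality on the generated $\sigma$-algebra.

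A further subtlety is that the interval partition must be fixed independently of the data. It depends only on timing, which is shared between adjacent models, so this holds.
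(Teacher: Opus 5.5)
Your proposal is correct and matches the paper's proof. Like the paper, you bound the $L_1$-sensitivity of $M_{k,x}$ by $b_{x,s}$ using the triangle inequality, the fact that $\mathcal{I}_k$ partitions the timepoints, \Cref{thm:delta} and \Cref{thm:bound}, and then invoke the Laplace mechanism (\Cref{thm:laplace}); your extra care about the infinite-dimensional output, the restriction of the sum to $T_{p,x}$, and the data-independence of the partition is not in the paper's proof but only adds rigor to the same argument.
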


\begin{proof}
	We compute an overapproximation of the L1-sensitivity of $M_{k,x}$.
	Given two $s$-distant, event-level adjacent evaluation models $\world$ and $\world'$:
	\begin{align*}
	\left\|M_{k,x}(\world) - M_{k,x}(\world')\right\|_1 &= \sum_{j=1}^\infty \left|M_{k,x}(\world)_j - M_{k,x}(\world')_j\right|\\
	&= \sum_{j=1}^\infty \left| \sum_{t\in I_{k,j}}\world(x)(t) - \world'(x)(t)\right|\\
	&\le \sum_{j=1}^\infty \sum_{t\in I_{k,j}} |\world(x)(t) - \world'(x)(t)|\\
	&\le \sum_{t=1}^\infty |\world(x)(t) - \world'(x)(t)| \cdot |\setCond{j}{t \in I_{k,j}}|
	\intertext{
		$I_k$ is a partition of all time points and therefore $|\setCond{j}{t \in I_{k,j}}| = 1$.
	}
	&= \sum_{t=1}^\infty |\world(x)(t) - \world'(x)(t)|\\
	&\le \sum_{t=1}^\infty \Delta_{t',p,s}^{t} \hspace{2cm}\text{\Cref{thm:delta}}
	\intertext{
		where $t'$ is the timepoint where $\world$ and $\world'$ differ.
	}
	&\le b_{x,s} \hspace{3cm}\text{\Cref{thm:bound}}
	\end{align*}
	As we have shown that $b_{x,s}$ is an overapproximation of the sensitivity of $M_{k,x}$, it follows from \Cref{thm:laplace} that $\tilde{M}_{k,x}$ satisfies $\varepsilon_k$-differential privacy w.r.t the event-level, $s$-distance adjacency relation.
\end{proof}

\begin{theorem}
	The mechanism $\tilde{M}(x) = (\tilde{M}_1(x),\tilde{M}_2(x),\ldots)$ satisfies $s$-distant, $\varepsilon$-private event-level differential privacy when $\sum_{k=1}^\infty \varepsilon_k \le \varepsilon$.
\end{theorem}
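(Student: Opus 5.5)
The plan is to combine the per-level guarantee from the preceding theorem with sequential composition, extended from finitely many to countably many mechanisms. Each $\tilde{M}_{k,x}$ is $\varepsilon_k$-event-level differentially private for $s$-distant adjacency. Its noise vectors $\eta_k$ are drawn independently across levels. So $\tilde{M}(x)$ is a composition of independent mechanisms, and I expect its privacy loss to be bounded by $\sum_k \varepsilon_k \le \varepsilon$.

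The first step handles finitely many levels. Fix $s$-distant, event-level adjacent $\world,\world'$ and a number of levels $K$. Each $\tilde{M}_{k,x}(\world)$ equals a deterministic vector $M_{k,x}(\world)$ plus independent Laplace noise. Hence the joint law of $(\tilde{M}_{1,x},\ldots,\tilde{M}_{K,x})$ is a product measure. The likelihood-ratio (density) form of the Laplace guarantee gives a bound of $\exp(\varepsilon_k)$ per level. Multiplying these bounds shows that for every measurable $S$ in the product space,
\[
  \Pr[(\tilde{M}_{k,x}(\world))_{k\le K}\in S]\le \exp\Bigl(\textstyle\sum_{k=1}^K\varepsilon_k\Bigr)\Pr[(\tilde{M}_{k,x}(\world'))_{k\le K}\in S].
\]
This is exactly the standard sequential composition theorem cited in the background, so I would invoke it directly rather than redo the density computation. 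Since $\sum_{k\le K}\varepsilon_k\le\varepsilon$, every finite prefix is $\varepsilon$-DP.

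The second step passes to the infinite product, and I expect it to be the main obstacle. The output space is a countable product of sequence spaces, so I would equip it with the product $\sigma$-algebra. The class of sets $S$ satisfying $\Pr[\tilde{M}(\world)\in S]\le e^{\varepsilon}\Pr[\tilde{M}(\world')\in S]$ contains all cylinder sets, meaning sets that constrain only finitely many levels and finitely many coordinates, by the first step. Cylinder sets form an algebra that generates the product $\sigma$-algebra. The class of sets satisfying the inequality is closed under increasing unions and decreasing intersections, by continuity of probability measures applied to both sides. It is therefore a monotone class, and the monotone class theorem extends the inequality to all measurable $S$.

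One subtlety concerns each level, which is itself an infinite sequence indexed by $j$. The preceding theorem already covers a whole level at once via an $L_1$-sensitivity bound on an infinite vector. If that infinite-dimensional Laplace claim needs justification, the same cylinder/monotone-class argument applies within a level.

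Finally, symmetry of the adjacency relation gives the inequality in both directions, which concludes $\varepsilon$-event-level DP.
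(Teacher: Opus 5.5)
Your proposal is correct and follows the paper's route: the paper simply invokes the composition theorem for the per-level $\varepsilon_k$-private mechanisms $\tilde{M}_{k,x}$, implicitly assuming the noise is independent across levels. Your monotone-class step, which passes from finitely many to countably many levels, fills in a detail the paper leaves unstated, since the standard composition theorem is stated only for finitely many mechanisms.
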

This follows directly from the composition theorem of differential privacy.
One example for $\varepsilon_k$ is using the geometric series with $\varepsilon_k = \frac{6}{\pi^2 \cdot k^2} \cdot \varepsilon$.

The mechanism releases all perturbed tree nodes.
For each timepoint, the reported aggregate is obtained by combining a finite subset of these nodes whose intervals partition the interval from the start up to this timepoint.
This combination is pure post-processing.

\section{RTLola Extension Constructs}

To build these constructs, we make use of RTLola features not covered in this paper.
For more information, we refer to the RTLola Tutorial~\cite{DBLP:conf/fm/BaumeisterFKS24}.

\subsection{Bounded Hold}

Given a bounded hold access from \lstinline!x! to \lstinline!y!:
\begin{lstlisting}
	output x @px@ := ...
	output y @py@ := x.hold(for_discrete: b).defaults(to: dft)
\end{lstlisting}
we can replace the bounded hold access by utilizing the annotated (or derived) pacing types and introducing a counter stream:
\begin{lstlisting}
	output x @px@ := ...
	output hold_counter
		eval @px@ with 0
		eval @py@ with hold_counter.offset(by:-1).defaults(to: 0) + 1
	output y @py@ := if hold_counter > b then dft else x.hold(or: dft)
\end{lstlisting}

\subsection{Tree Based Aggregation}

As an example, consider the following over-all sum aggregation:
\begin{lstlisting}
input x : UInt64
output y := x.aggregate(over_discrete: all, using: sum)
\end{lstlisting}
we can express this aggregation using the tree-based approach:
\begin{lstlisting}
input x : UInt64
output time : UInt64 @x@ := time.last(or: 0) + 1
output height := log_2(time)
output layers(level)
  spawn with height
  eval @x@ when level == 0 with a
  eval @x@ when (time - 1) % (2**level) == 0 with a
  eval @x@ when (time - 1) % (2**level) != 0 with layers(level).last(or: 0) + x

output private_layer(level)
  spawn with height
  eval @x@ when time % (2**level) == 0 with layers(level).hold(or: 0) + laplace(1/(level+1))

output layer_contribution(level)
  spawn with height
  eval @x@ when ((time >> level) & 1) == 1 with private_layer(level).hold(or: 0)

output y := layer_contribution.aggregate(over_instances: fresh, using: sum)
\end{lstlisting}
A sliding window \lstinline!output b @1s@ := a.aggregate(over: 3s, using: sum)! can be expressed as
\begin{lstlisting}
output time @1s@ := time.last(or: 0) + 1
output time_m @1s@ := time % 6
output l0 @1s@ := a.aggregate(over: 1s, using: sum)
output l0_n @1s@ := l0 + laplace(1/2)
output l1 @1s@ := if (time + 1) % 2 == 0 then l0 + l0.offset(by:-1).defaults(to: 0.0) else 0.0
output l1_n @1s@ := l1 + laplace(1/2)
output b @1s@ :=
  if time_m == 0 || time_m == 2 || time_m == 4 then
    l1_n + l0_n.offset(by: -2).defaults(to: 0.0)
  else if time_m == 1 || time_m == 3 || time_m == 5 then
    l1_n + l0_n.offset(by: -2).defaults(to: 0.0)
  else 0.0
\end{lstlisting}

\section{Specifications}

\subsection{Fraud Detection}
\label{app:fraud_detection}

\begin{lstlisting}
#[sensitivity="unbounded"]
input target: Float64
#[range_from="0", range_to="1000"]
input amount : Float64

output total_amount_per_month @30d := amount.aggregate(over: 30d, using: sum)
output average_amount_per_month @30d := amount.aggregate(over: 30d, using: average).defaults(to: 0.0)

#[public]
output t0 := total_amount_per_month > 50000.0
#[public]
output t1 := average_amount_per_month > 700.0

output amount_per_target(t)
  spawn with target
  eval with if t == target then amount else 0.0

output total_amount_per_target_per_month(t)
  spawn with target
  eval @Global(30d)@ with amount_per_target(t).aggregate(over: 30d, using: sum)

output max_amount_to_single_target_per_month
  eval @30d@ with total_amount_per_target_per_month.aggregate(over_instances: all, using: max).defaults(to: 0.0)

output total_max_ratio @30d@ := max_amount_to_single_target_per_month / total_amount_per_month

#[public]
output t2 := max_amount_to_single_target_per_month > 10000.0
#[public]
output t3 := total_max_ratio > 0.5
\end{lstlisting}

\subsection{Public Transportation}
\label{app:public_transportation}

\begin{lstlisting}
#[sensitivity="unbounded"]
input line : UInt64
#[range_from="1", range_to="10"]
input crowdedness : Float64

output crowdedness_per(l)
  spawn with line
  eval with if line == l then crowdedness else 0.0

output one_if(l)
  spawn with line
  eval with if line == l then 1.0 else 0.0

output sum_crowdedness(l)
  spawn with line
  eval @1h@ with crowdedness_per(l).aggregate(over: 1h, using: sum)

output count_crowdedness(l)
  spawn with line
  eval @1h@ with one_if(l).aggregate(over: 1h, using: sum)

output avg_crowdedness(l)
  spawn with line
  eval @1h@ with sum_crowdedness(l) / count_crowdedness(l)

output avg_crowdedness_clipped(l)
  spawn with line
  eval @1h@ with if avg_crowdedness(l) > 10.0 then 10.0 else if avg_crowdedness(l) < 0.0 then 0.0 else avg_crowdedness(l)

#[public]
output avg_124
  eval @1h@ when count_crowdedness(1.0).hold(or: 0.0) > 5.0 with avg_crowdedness_clipped(1.0).hold(or: 0.0)
#[public]
output avg_101
  eval @1h@ when count_crowdedness(2.0).hold(or: 0.0) > 5.0 with avg_crowdedness_clipped(2.0).hold(or: 0.0)
#[public]
output avg_party
  eval @1h@ when count_crowdedness(3.0).hold(or: 0.0) > 5.0 with avg_crowdedness_clipped(3.0).hold(or: 0.0)
\end{lstlisting}

}{}

\end{document}